\newtheorem*{rep@theorem}{\rep@title}
\newcommand{\newreptheorem}[2]{%
\newenvironment{rep#1}[1]{%
 \def\rep@title{#2 \ref{##1}}%
 \begin{rep@theorem}}%
 {\end{rep@theorem}}}
\theoremstyle{plain}
\newtheorem{theorem}{Theorem}
\newtheorem{proposition}[theorem]{Proposition}
\newtheorem{corollary}[theorem]{Corollary}
\newtheorem{lemma}[theorem]{Lemma}
\theoremstyle{definition}
\newtheorem{definition}[theorem]{Definition}
\def\ra{\rightarrow}
\def\ba{\begin{array}}
\def\ea{\end{array}}
\def\bi{\begin{itemize}}
\def\ei{\end{itemize}}
\def\mR{\mathbb{R}}
\def\mZ{\mathbb{Z}}
\def\mN{\mathbb{N}}
\def\mE{\mathbb{E}}
\def\m1{1}
\def\cV{\mathcal{V}}
\def\cE{\mathcal{E}}
\def\cZ{\mathcal{Z}}
\let\oldhat\hat
\renewcommand{\vec}[1]{\mathbf{#1}}
\renewcommand{\hat}[1]{\oldhat{\mathbf{#1}}}
\begin{document}
\title{Push sum with transmission failures}

\date{\today}

\author{Bal\'azs Gerencs\'er, Julien Hendrickx}
\author{Bal\'azs Gerencs\'er\thanks{B. Gerencs\'er and J. M. Hendrickx are with ICTEAM Institute,
  Universit\'e catholique de Louvain, Belgium
  {\tt\small balazs.gerencser@uclouvain.be} and {\tt\small
    julien.hendrickx@uclouvain.be} 
Their work is supported by the DYSCO Network (Dynamical Systems,
Control, and Optimization), funded by the Interuniversity
Attraction Poles Programme, initiated by the  Belgian
Federal Science Policy Office, and by the Concerted Research Action (ARC) of the
French Community of Belgium.}%
\and Julien M. Hendrickx\footnotemark[1]
}

\maketitle

\begin{abstract}
  The \emph{push-sum} algorithm allows distributed computing of the
  average on a directed graph, and is particularly relevant when one is restricted to one-way and/or asynchronous communications. We investigate its behavior in the presence of unreliable communication channels where messages can be lost.
  We show that exponential convergence still holds and deduce fundamental
  properties that implicitly describe the distribution of the final
  value obtained. We analyze the error of
  the final common value we get for the essential case of two nodes,
  both theoretically and numerically. We provide performance comparison with a standard
  consensus algorithm.
 \end{abstract}

\section{Introduction}
\label{sec:intro}

The ongoing active research on decentralized systems and distributed
computation constantly faces the mathematical challenge of aggregating
information spread across a huge network
\cite{jadbabaie2003coordination}, \cite{moreau2005stability},
\cite{cybenko1989dynamic}.
A fundamental case is
the task of averaging certain measurements obtained at the nodes of the
network which is known as the average consensus problem
\cite{tsitsiklis:phd1984}. The difficulty of this task heavily depends
on the assumptions on the communication network and the information available to the nodes.

In the simplest case, nodes communicate according to a
graph synchronously and without error. One method consists of the nodes exposing their
current values to their neighbors
and in turn updating their own values using a linear combination of
the values they have access to. The system is said to reach average consensus
if the values of all nodes converge to the average of the initial
measurements. Under suitable connectivity conditions, this can easily be achieved if the - possibly
time-varying - graph is undirected, as symmetric choice of weights will then keep the average constant. 
We will refer to these as
\emph{standard consensus} methods. They can be implemented under
particularly light requirements: nodes do not need unique identifiers
or global information about the system (e.g., to know the number of
nodes), and only use a single variable per node. Variations of
  this approach can be implemented on networks with directed
  asynchronous communications, but it is then in general impossible to
  keep the average constant over time, consequently the system does not
  converge to the initial average, see
  e.g., \cite{aysal2009broadcast,dimakis2010gossip,frasca2013large,frasca2013mean,
fagnani2008asymmetric} for analysis of the error for different such
algorithms. (Note that average preservation remains possible in
networks with directed communications if communications are
synchronous and pre-determined averaging weights satisfying a balance
condition are available).

The \emph{push-sum algorithm} (also known as ratio-consensus \cite{hadjicostis2014average}) allows computing the average on networks with directed asynchronous communications at the cost of one additional variable per node:
nodes not only record a linear combination of other nodes' values,
but also keep track of their ``relative importance'' in the system.
Therefore, two variables per node are required. The
algorithm can be asynchronous, with a randomly chosen node
communicating towards another at every step, without waiting for reply. This method
is known to efficiently compute the perfect average
\cite{kempe2003gossip,hadjicostis2014average}
and does not require agents to have unique identifiers or global information about the system.

We also note the existence of an interesting alternative approach based on so-called surplus
\cite{franceschelli2011distributed, wu2013broadcast, cai2012average,
  cai2011quantized}: nodes run a standard consensus
algorithm that does not necessarily preserve the average, and they
compensate their updates by removing the variation of their state
$x_i$ from an additional \quotes{surplus} variable $s_i$, so that the
sum of all states $x_i$ and surplus variables $s_i$  remains constant. An appropriate
mechanism then mixes the surplus variables, and slowly re-introduces
their values in the primary state variables $x_i$. This approach has been shown to
work in different context, but several open questions remain about the
general conditions under which it converges. 
Besides, 
all currently available implementations of this approach of which we are aware require using implicitly or explicitly 
global information about the system. 
To the best of our knowledge, every other available class of distributed algorithms allowing computing an average on networks with directed  asynchronous communications requires using unique node identifiers and/or global information about the network.

There is a strong research effort to understand the performance of these averaging methods
taking into account the deficiencies of the communication channels.
Such limitations include packet delays \cite{bliman2008average},
packet drops \cite{fagnani2009average},
changing connection topology \cite{ren2005consensus},
\cite{matei2013markovgraph}, limited or noisy communications
\cite{huang2009noisymeasurements}, \cite{bauso2009boundeddisturbance} 
or several of these \cite{nedic2010constrained}.
The issue we currently focus on is the presence
of transmission failures, resulting in some messages being lost.

In (asynchronous versions of) standard consensus protocols,
the emitter of a message does not change its state, so the loss of a
message is equivalent to the absence of communication. In
particular, when all messages have the same probability of being
lost, message losses only slow down the process without affecting
the distribution of the end result.
By contrast, the emitter of a message in a push-sum algorithm does
change its state, so the undetected loss of a message is not
equivalent to an absence of communication and will result in an
additional error.
A clever correction mechanism relying on sending the sum of all messages that would have been sent in the push-sum algorithm was proposed in \cite{vaidya2012robust}.
However, this method requires additional capabilities for the nodes; they need to have local
identifiers for each potential neighbor, and also an extra
memory slot for each communication link.

In this work, we restrict our attention to methods that do not use identifiers, large number of variables, or global information about the graph: We analyze the convergence and performance of the original push-sum algorithm in the presence of transmission failures without any corrective mechanism, similarly to what was done for asynchronous directed versions of standard consensus in
\cite{frasca2013large}. We also compare the error performance of the
push-sum algorithm and a comparable version of standard
consensus on directed networks with packet losses. Note that for push-sum,
errors are caused solely by packet losses, while for the other algorithm, errors are caused solely
by the directed character of the network.

Our contribution can be summarized as follows. First, we establish the
convergence of the push-sum algorithm in the presence of message
losses in Theorem~\ref{thm:convergence}. This ensures that the
algorithm does provide an
agreement among the nodes but does not yet give information on the
final value. The final value is possibly random as the algorithm
itself involves random elements, therefore a probability distribution
of the final value is expected instead of a deterministic number. The
resulting distribution is characterized in Theorem~\ref{thm:invariance}.

Bounds on the expected quadratic error of the final value compared to the real
average are developed for the fundamental case of two nodes. Theorem~\ref{thm:lowerbound}
provides lower bounds while Theorem~\ref{thm:generalupperbound} gives an upper bound.
New essential tools are introduced on the way in order to
deal with the non-standard features of the random process generated by
the push-sum algorithm.

Finally, the relevance of the push-sum algorithm is justified in the
presence of transmission failures: The performance is numerically compared with
that of a standard consensus algorithm, the asymmetric randomized gossip algorithm (ARGA) \cite{fagnani2008asymmetric}, and we indeed see that
push-sum is more efficient when the transmission failure rate is not
extremely high.

The rest of the paper is organized as follows. In
Section~\ref{sec:pushsum_intro} we formally describe the push-sum
algorithm. Results are stated in Section~\ref{sec:results}.
Section~\ref{sec:invariance} provides the tools for general
  understanding of the process which is needed to
perform our analysis. We then prove the error bounds in
Section~\ref{sec:boundproof} for two nodes.
Conclusions and further research directions are
discussed in Section~\ref{sec:conclusions}.

\section{The push-sum algorithm}
\label{sec:pushsum_intro}

In the general setting the goal of the push-sum algorithm is the following.
Given are $n$ agents with a strongly connected directed connection
graph and some input values at each agent, which we call \emph{initial measurements}. The agents want to
reach average consensus, all agreeing on the average of the initial measurements,
using only limited communication along the
edges of the graph.

First, we introduce the original version with perfect communication
from \cite{kempe2003gossip} then we present the setting considered in the current paper.

\begin{definition}
\label{def:pushsum_noerror}
The (gossip) \emph{push-sum algorithm with perfect communication} is the
following process. Given are a strongly connected directed graph $G$
and a distribution $\pi$ on the directed edges.
Every agent stores a value $x_i(t)$ and also an abstract
weight variable $w_i(t)$. The values $x_i(0)$ are the initial
measurements, which are to be averaged by the algorithm. The weights are initialized as $w_i(0)=1$ for
all $i$. At every time step, an edge $e=(i,j)$ is chosen from the
graph randomly and independently
according to $\pi$. As we work with asynchronous communication, we assume
only one edge is active at a time.
Agent $i$ sends half of both the value and
the weight to agent $j$. To be more precise,
the following update is performed:
\begin{equation}
\label{eq:pushsum_noerror}
\begin{aligned}
  x_i(t+1) &= x_i(t)/2, &  x_j(t+1) &= x_j(t)+x_i(t)/2,\\
  w_i(t+1) &= w_i(t)/2, &  w_j(t+1) &= w_j(t)+w_i(t)/2.
\end{aligned}
\end{equation}

\end{definition}

The behavior of the algorithm strongly depends on the connection graph and on the way
we pick the edges at every step. It is known to almost surely (a.s.) work well in a quite general setting
even with multiple simultaneous communications \cite{benezit2010weighted},
meaning that the rescaled values approach the average of the initial measurements at
every node:
$$\lim_{t\ra\infty}\frac{x_i(t)}{w_i(t)} = \frac{\sum x_j(0)}{n},
\qquad \forall i,~a.s.$$

In this paper we analyze the effect of transmission
errors: some messages sent might not reach their
destination, without the sender being aware of the failure. The
emitter would still update as if the transmission had been successful,
while the intended receiver would not, resulting in a loss of
information. 

\begin{definition}
\label{def:pushsum}
The \emph{push-sum algorithm with transmission failures}. The same
initial setup is used as in
Definition~\ref{def:pushsum_noerror}: given are a graph $G$, a distribution
$\pi$ on the edges, the initialization of the variables
$x_i(0),w_i(0)$. Additionally, for every edge $e$ a transmission
failure probability $p_e$ is given.
Edges $e=(i,j)$ are chosen the
same way according to $\pi$ at every step. However, this time a randomized update is performed: 
\begin{equation}
\label{eq:pushsum_werror}
\begin{aligned}
  x_i(t+1) &= x_i(t)/2, &  x_j(t+1) &= x_j(t)+\chi_e(t+1) x_i(t)/2,\\
  w_i(t+1) &= w_i(t)/2, &  w_j(t+1) &= w_j(t)+\chi_e(t+1) w_i(t)/2,
\end{aligned}
\end{equation}
where $\chi_e(s)$ are a collection of independent random variables taking a value $1$ with probability $1-p_e$ (successful transmission) and $0$ else (failed transmission).
\end{definition}

In the sequel we will study the asymptotic behavior of this system.
We will show that the rescaled values $x_i(t)/w_i(t)$ converge even though weights and
values go to 0, and then analyze the error between the limit
and the true average.
We propose general tools for
treating this process with transmission errors.
We exploit them in the simple case of two agents to get explicit error
bounds. 

\section{Results}
\label{sec:results}

\subsection{Convergence and error bounds}

Our first result, proved in Section~\ref{subsec:convergence}, ensures that the push-sum algorithm with transmission
failures still converges almost surely to some point in the convex
hull of the initial measurements, whenever $p_e<1$ (that is, there is
communication actually happening on the graph).

\begin{theorem}
  \label{thm:convergence}
The push-sum algorithm converges exponentially fast almost surely:
There exists some $q > 1$ and a random variable $x^*$ taking values in the convex
hull of $\{x_1(0),x_2(0),\dots, x_n(0)\}$ such that
$$
\limsup_{t \to \infty} \left|\frac{x_i(t)}{w_i(t)} - x^*\right|q^t \le
1,\quad a.s., \forall~i
$$
\end{theorem}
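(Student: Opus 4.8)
The plan is to recognize that both the value vector and the weight vector are driven by the \emph{same} sequence of random nonnegative matrices, and to reduce the claim to a contraction (weak ergodicity) statement for the associated row-normalized products. First I would encode one step as a linear map: a successful transmission on the directed edge $(i,j)$ via \eqref{eq:pushsum_noerror} is left-multiplication by the column-stochastic matrix $M$ with $M_{ii}=\tfrac12$, $M_{ji}=\tfrac12$, $M_{jj}=1$ and all other diagonal entries $1$, while a failed transmission via \eqref{eq:pushsum_werror} is the column-substochastic matrix equal to the identity except for the entry $\tfrac12$ in position $(i,i)$. Since the update of $x$ mirrors that of $w$, both obey $x(t)=P(t)x(0)$ and $w(t)=P(t)\mathbf 1$ with the \emph{same} product $P(t)=M(t-1)\cdots M(0)$ of nonnegative matrices. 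Hence, writing $q_i(t)$ for the $i$-th row of $P(t)$ normalized to sum $1$ (a probability vector, since the entries are nonnegative and $w_i(t)=\sum_j P_{ij}(t)>0$), one obtains the identity
$$\frac{x_i(t)}{w_i(t)}=\sum_j q_{ij}(t)\,x_j(0).$$
This already shows each ratio lies in the convex hull of the initial values, so it remains to prove that the rows $q_i(t)$ all converge, exponentially fast, to a common probability vector $q^*$; then $x^*=\sum_j q^*_j x_j(0)$ is the sought limit and lies in the convex hull.

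Next I would extract the dynamics of the normalized rows, equivalently of the ratios $r_i(t):=x_i(t)/w_i(t)$. A direct computation from $P(t+1)=M(t)P(t)$ shows that a \emph{failure} leaves every $q_i$ unchanged, while a \emph{success} on $(i,j)$ leaves the sender row $q_i$ unchanged and replaces the receiver row by the convex combination $q_j\mapsto(1-\beta)q_j+\beta q_i$ with mixing weight $\beta=\tfrac12 w_i/(w_j+\tfrac12 w_i)\in(0,1)$. Consequently the dynamics of the $r_i$ is an averaging (gossip) process, and the diameter $d(t):=\max_i r_i(t)-\min_i r_i(t)$ — and more generally the diameter of the convex hull of the $q_i(t)$ — is non-increasing. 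The theorem thus reduces to showing that this diameter contracts geometrically, which is standard once the averaging is shown to be ``scrambling'' sufficiently often with mixing weights bounded away from $0$.

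To obtain exponential almost-sure contraction I would argue by independent blocks and Borel--Cantelli. I would partition time into consecutive blocks of a fixed length $L=L(n)$ and, using that the graph is complete, edge selection uniform and $p<1$, define in each block a \emph{good event}: a prescribed finite sequence of successful transmissions that first re-balances the weights and then merges all rows, reducing the diameter by a factor at most $1-\delta$ for some $\delta=\delta(n)>0$. Each such good event has probability at least some $\rho>0$ independent of the current configuration, and the events in disjoint blocks are independent; hence infinitely many occur, with positive asymptotic density, so $d(t)$ decays geometrically almost surely, yielding some $\gamma>1$ with $\limsup_t(r_i(t)-x^*)\gamma^t\le 1$ for every $i$.

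The main obstacle is precisely that the mixing weight $\beta$ depends on the ratio $w_i/w_j$, and individual weights shrink — each node halves its weight whenever it transmits — so a node that has recently sent many times may carry a tiny weight, making $\beta$ and hence the per-block contraction arbitrarily small. The crux is therefore to control the weights: I would exploit that within a block containing no failures the total weight $\sum_k w_k$ is conserved, and design the good sequence so that it first drives the weights to within a factor depending only on $n$ before performing the merging step, thereby guaranteeing that the mixing weights along the prescribed sequence are bounded below by a constant $\delta(n)>0$ uniformly in the block's initial configuration. Establishing this uniform weight-balancing, and thereby the uniform lower bound $\rho$ on the probability of a contracting block, is where the real work lies; the role of the hypothesis $p<1$ is exactly to ensure these all-success blocks have positive probability.
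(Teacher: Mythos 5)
Your overall architecture---reduce the theorem to geometric contraction of the diameter of the ratios $x_i(t)/w_i(t)$, achieved on block events of uniformly positive probability, then conclude by independence across blocks and a Chernoff/Borel--Cantelli argument---is exactly the architecture of the paper's proof, and your matrix/normalized-row formulation of the convex-hull part is correct (it is also what underlies the paper's later lemma on the coefficients $\tau$). The gap is at the point you yourself flag as ``where the real work lies'': you never establish the uniform lower bound $\delta(n)$ on the mixing weights, and without it the per-block contraction factor can degenerate as the weights become unbalanced, so as written the proof is incomplete at its only non-routine step. Your proposed two-phase fix (a weight-balancing sequence followed by a merging broadcast) remains a plan rather than a proof.

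The missing idea turns out to be a one-line device, and it is how the paper closes this step: let the broadcaster be the node $i$ of \emph{currently maximal} weight. If that node successfully sends to every other node in the next $n-1$ steps (and nothing else happens in that window), every receiver $j$ gains at least $2^{-n+1}w_i(t)\ge 2^{-n+1}w_j(t)$, hence
$$
\left|\frac{x_i(t+n-1)}{w_i(t+n-1)}-\frac{x_j(t+n-1)}{w_j(t+n-1)}\right|
\le \frac{1}{1+2^{-n+1}}\left|\frac{x_i(t)}{w_i(t)}-\frac{x_j(t)}{w_j(t)}\right|,
$$
so the diameter contracts by $\rho=1/(1+2^{-n+1})$ \emph{uniformly in the weight configuration}, with no preliminary balancing phase at all. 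Note moreover that the natural proof of your deferred balancing lemma is this very same broadcast: after the maximal-weight node broadcasts, conservation of the total weight gives that all weights lie within a factor $n\,2^{n-1}$ of each other---and that broadcast has already performed the merging, so your second phase is redundant once the first is done correctly. With this device inserted, your block-wise Borel--Cantelli conclusion goes through exactly as in the paper.
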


Therefore the algorithm does provide an
agreement among the nodes, but their final value is a random variable $x^*$ in general different from the average of the initial measurements.
This common final
value can be expressed as a convex combination of the initial measurements. This will be a consequence of Proposition \ref{prp:PSfn_coeff} in Section~\ref{sec:finalvalues}.

\begin{theorem}
\label{thm:finallinear}
The final value $x^*$ of Theorem~\ref{thm:convergence} is a random variable that can be re-expressed as 
$$
x^* = \sum_{i=1}^n \tau_i x_i(0),
$$
where the vector $(\tau_1,\tau_2,\ldots,\tau_n)$ is a random
variable that takes its values on the $n-1$-dimensional simplex: $\tau_i\geq 0$, $\sum_{i=1}^n \tau_i =1$,
and whose distribution is
independent of $x(0)$.
\end{theorem}

If $\tau_i = \frac{1}{n}$ for all nodes, then $x^*$ is the
exact average of the $x_i(0)$. We take as measure of performance of
the algorithm the expected square error on $\tau$ with respect to this ideal case: 
\begin{equation}
  \label{eq:Rdef}
  R = \mE\left(n \sum_{i=1}^n \left(\tau_i - \frac{1}{n}\right)^2\right).  
\end{equation}
In particular, $R=0$ exactly when $\tau =
(1/n,1/n,\ldots,1/n)$ and $R$ is maximal in the worst case of $\tau = (1,0,\ldots,0)$ corresponding to the final value being influenced by only one single node, when it becomes $n-1$. If $\tau$ was uniformly chosen
from the simplex, $R$ would be $1-\frac{4n}{(n+1)(n+2)}$. One can also verify that $R/n$ corresponds to the expected square error if the $x_i(0)$ are scalar i.i.d. random variables drawn from a distribution with variance $1$. (We chose $R$ and not $R/n$ as error measure because one can verify that $R$ remains invariant when several identical $\tau$ are joined with identical scaling.)

\begin{figure}[h]
  \centering
  \includegraphics[width=0.6\textwidth]{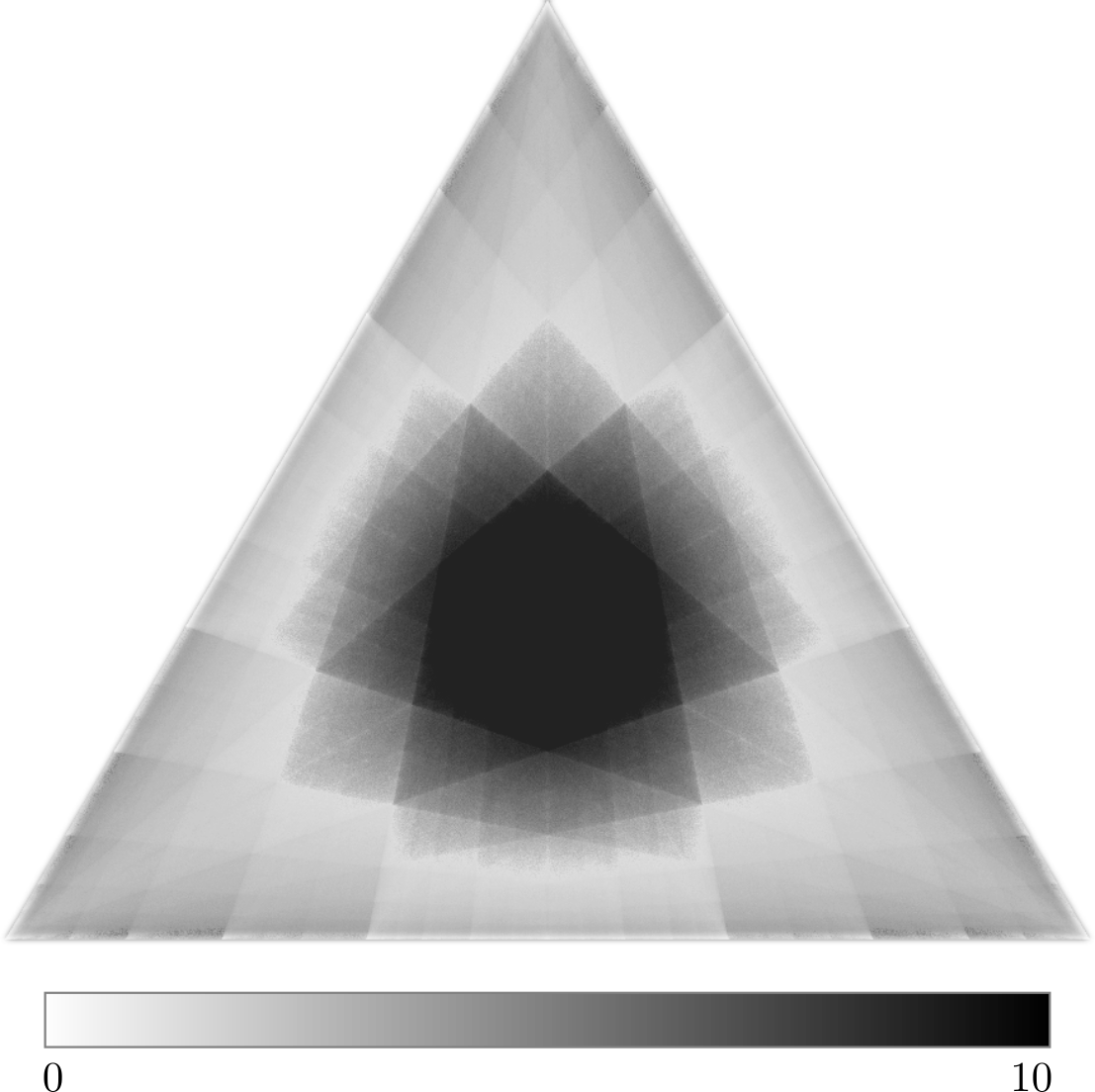}
  \caption{Simulation of the density of the distribution $\mu$ of the
    vector $\tau$
     of Theorem \ref{thm:finallinear}, for $n=3$
    nodes fully connected and transmission failure probability
    $p_e=0.6$ on all edges.}
  \label{fig:hist06}
\end{figure}

In order to derive bounds on $R$ a key tool of our analysis is to understand
the distribution $\mu$ of the
vector $\tau$. This distribution is relatively complex, as suggested
by Figure~\ref{fig:hist06}, where we present a simulation of the density of
$\mu$ for fully connected 3 nodes with
$p_e=0.6$ on all edges (using 500000 samples with the density displayed on a
logarithmic scale). Therefore it is supported on the triangle spanned by $(0,0,1),(0,1,0),(1,0,0)$.
We will see that the distribution $\mu$ can be characterized using an
invariance relation:
\begin{equation}
  \mu = \mE F\mu,
  \label{eq:muinvariance}
\end{equation}
where $F$ is a random measure transformation based on a single
step of the push-sum algorithm.
The random measure transformation $F$ is fully defined in
Section~\ref{subsec:measures} using the notations introduced on
the way and Theorem~\ref{thm:invariance} provides the precise statement
formalizing \eqref{eq:muinvariance}.

We proceed to derive bounds on this final
value. This is done for the fundamental case of 2 nodes, where the
edges in the two directions are selected with equal probability and
the transmission failure probability is the same for them, denoted
simply by $p$ from now. The proofs
for these bounds are in Section~\ref{sec:boundproof}.

\begin{theorem}
  \label{thm:lowerbound}
   Let $\phi=\frac{1-\sqrt{1-p^2}}{p}$. 
  The following lower bounds
  hold for the expected quadratic error $R$ for 2 nodes having equal probabilities of initiating a transmission:
  \begin{equation}
    \label{eq:lowerbound}
    R\ge \phi - 4(1-\phi)\sum_{i=1}^\infty
    \frac{(2\phi)^i}{(2^i+1)^2}\ge \phi -
    \frac{8}{9}\phi(1-\phi)-\frac{2}{2-\phi}\phi^2(1-\phi).    
  \end{equation}
\end{theorem}
For $p\ra 0$, both lower bounds are asymptotically
$\frac{p}{18} + O(p^2)$.

\begin{theorem}
  \label{thm:generalupperbound}
  The following upper bound
  holds for the expected quadratic error $R$ for 2 nodes having equal probabilities of initiating a transmission:
  \begin{equation}
    \label{eq:generalupperbound}
    R \le \frac{p(1-p)^2}{3+p} + \frac{p}{25(1+p^2)}\left(18+23p+50p^2-41p^3\right).
  \end{equation}
\end{theorem}

The ideas that we use could in principle be applied to systems with three or
more nodes, but the derivations become much more involved, partly due
to the more complex distribution of $\tau$ generated by our algorithm
on higher dimensional simplexes.

\subsection{Simulations}

Three types of simulations have been performed. First, to evaluate the
numerical performance of the theoretical bounds obtained. Second, to
get an overview of the behavior of the algorithm for larger graphs
which are currently out of reach with our analytic results. Finally,
to compare the push-sum concept with ARGA (asymmetric randomized gossip algorithm \cite{fagnani2008asymmetric}) as a tool
for distributed averaging.

In view of Theorem~\ref{thm:finallinear}, we can understand the
precision of the algorithm via the expected quadratic error $R$ of the random variable $\tau$ defined in \eqref{eq:Rdef}.
This corresponds to launching the algorithm with the basis vectors
$e_i=(0,\ldots,0,1,0,\ldots)$ as initial values at the nodes of the
graph.

For the computations, the algorithm is executed until approximate
convergence is achieved which is formulated using the following stopping criterion: We
require that for each coordinate of the coefficient
vector, which would eventually converge to $\tau$, the nodes should
all agree within a factor of $1.0001$. There is also a step count
limit when the simulation is discarded if agreement is not reached
within this bound. This is set at
$1M$ when the nodes are fully
connected and at $5M$ for the
supposedly slower grid and cycle topologies.
These bounds are set to avoid the rare but extremely long instances
and were chosen high enough to be barely reached (and possibly cause a
bias on the result) except when $p_e>0.95$
on all edges.

For $p_e<1$ message losses drive all value and weight variables $x_i(t)$
and $w_i(t)$ to 0, which might cause quantization problems
on a digital computer, but floating point representation of numbers
avoids such issues. Nevertheless, when implementing the algorithm for much larger graphs,
additional attention has to paid to avoid such issues.

The numerical performance of the lower and upper bounds proven in
Theorem~\ref{thm:lowerbound} and \ref{thm:generalupperbound} is shown in Figure~\ref{fig:lowerupperbounds}.
\begin{figure}[h]
  \centering
  \resizebox{0.6\textwidth}{!}{\input{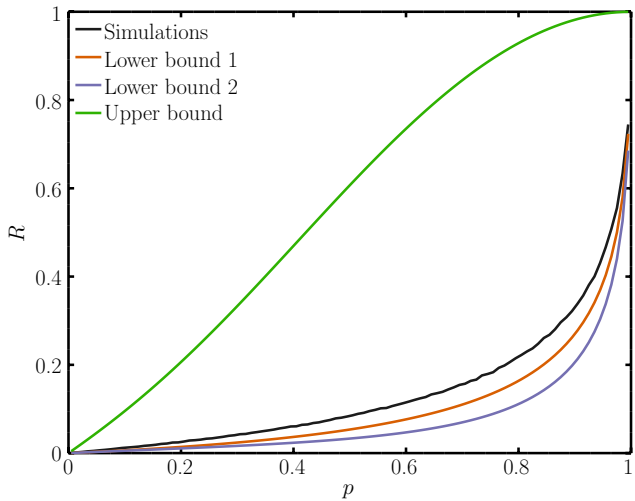}}
  \caption{Performance of the lower and upper bounds on the expected quadratic
    error $R$ for two nodes stated in Theorem~\ref{thm:lowerbound} and Theorem~\ref{thm:generalupperbound} for different transmission failure
    probabilities $p$, compared with simulation data based on 3M
    samples. Lower bound 1 and 2 correspond to the middle and final
    expressions in the inequality of Theorem~\ref{thm:lowerbound}, respectively.}
  \label{fig:lowerupperbounds}
\end{figure}
We can see that both lower bounds follow well the numerical
values obtained through the whole range of $p$. However, the upper
bound is visibly rather conservative. It does capture the linear
nature of the error when $p\ra 0$ but the numerical values are still
far.

We also investigate the dependence of the error on the transmission
failure probability for larger
graphs. We assume the graph is homogeneous with this respect, having
the same failure probability $p$ on all edges. We consider the cycle, the grid, and the complete graph topologies for
different node counts (to ensure homogeneity, the grid is arranged as a torus, with left and
right, top and bottom edges connected, respectively). The data is
shown in Figure~\ref{fig:err_clique_summary}, \ref{fig:err_grid_summary} and \ref{fig:err_cycle_summary}.
\begin{figure}[h]
  \centering
  \resizebox{0.6\textwidth}{!}{\input{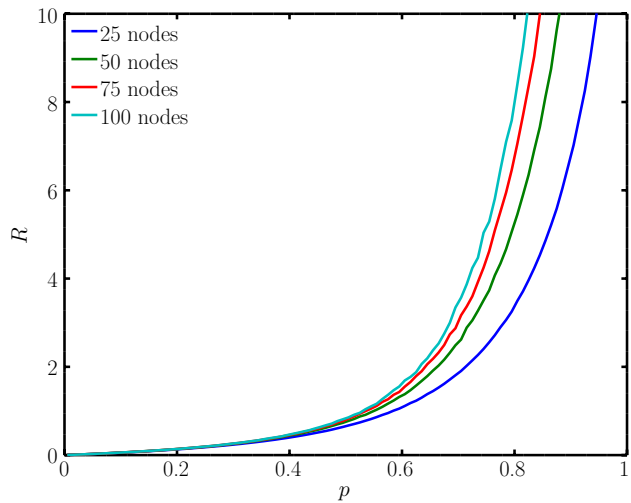}}
  \caption{Evolution of the expected quadratic error $R$ with the transmission failure
    probability $p$ for complete graphs of different sizes, based on
    simulation data of 10M samples.}
  \label{fig:err_clique_summary}
\end{figure}
\begin{figure}[h]
  \centering
  \resizebox{0.6\textwidth}{!}{\input{error_grid}}
  \caption{Evolution of the expected quadratic error $R$ with the transmission failure
    probability $p$ for grid graphs of different sizes, based on
    simulation data of 3M samples.}
  \label{fig:err_grid_summary}
\end{figure}
\begin{figure}[h]
  \centering
  \resizebox{0.6\textwidth}{!}{\input{error_cycle}}
  \caption{Evolution of the expected quadratic error $R$ with the transmission failure
    probability $p$ for cycle graphs of different sizes, based on
    simulation data of 1.5M samples.}
  \label{fig:err_cycle_summary}
\end{figure}

We see that for the grid and the complete graph the two plots are
qualitatively similar. However, for the cycle graph we get a
distinctly different picture.
Considering the cycle graph, a scaled
version of the error $R/n$ is plotted in
Figure~\ref{fig:err_cycle_scaled} in order to reveal our
numerical finding, that for fixed $p$, $R/n$ appears to remain
constant with $n$ for large enough cycles. The scaled error
$R/n$ for the other networks did not exhibit any specific
behavior.

\begin{figure}[h]
  \centering
  \resizebox{0.6\textwidth}{!}{\input{error_cycle_scaled}}
  \caption{Evolution of $R/n$, the expected quadratic error after
    scaling with the transmission failure
    probability $p$ for cycle graphs of different sizes, based on
    simulation data of 1.5M samples.}
  \label{fig:err_cycle_scaled}
\end{figure}

Following the analysis of the push-sum algorithm, we compare its efficiency to
the ARGA \cite{fagnani2008asymmetric}, which is a variation of
  the standard consensus methods, and can be implemented in exactly
  the same condition as the version of push-sum we study: It requires
  no identifiers,  global knowledge, or large number of variables. At
  every time step, an $(i,j)$ edge is chosen randomly, then node $i$
  sends a message to $j$. But nodes store only one variable
  $x_i(t)$, and the update after $i$ has sent its value to $j$ is
\begin{align}
\label{eq:consensus_def}
  x_i(t+1) &= x_i(t), & x_j(t+1) &= (x_j(t)+x_i(t))/2.
\end{align}
It is easy to see that this also converges to consensus. However, even
with no communication error the consensus value might deviate from the
real average \cite{fagnani2008asymmetric}. On the other hand, transmission failures will not increase
this error if the probability is the same for all messages; the will only slow down the process.

For a fair comparison it will be useful to allow a slight adjustment of
both algorithms. In both cases, we have a hard-coded parameter of
influence set at $1/2$. This is the ratio sent for the push-sum
algorithm and the strength of influence for ARGA. We
get valid and convergent algorithms if we change this value to some
other $0<\alpha<1$.

This way the algorithm has two parameters: first, the external
transmission failure
probability $p$ which we again assume to be the same on all edges and
second, the chosen influence ratio $\alpha$. We also have
two performance metrics: the error
from the real average and the speed to reach consensus. Therefore we
get a legitimate comparison in the following way. Given an error
probability $p$ and a desired error $R$, we choose $\alpha$ for both
algorithms separately to achieve this error. Knowing that now they reach the same level of accuracy, comparing the speed of the two will tell us
which one is more efficient.

Therefore we generate a large number of instances, randomly choosing
$p$ and $\alpha$ and compute the speed and error the algorithms
give. We generated over 60M instances for the push-sum algorithm, and
over 35M for ARGA (for the latter we can slightly simplify
as $p$ simply scales the speed without changing the error,
we can fix it to 1 and then scale it to the values needed).
In Figure~\ref{fig:speedcomparison} we present the
performance comparison results of the two algorithms for a network of fully connected 5
nodes. The parameter space of $(p,R)$ is split into 3 regions:
\begin{itemize}
\item[a)] ARGA performs faster. This is the case when
  we want low error despite the extremely high $p$.
\item[b)] Push-sum is more efficient.
\item[c)] None of the random instances of the push-sum algorithm ended
  up in this region.
\end{itemize}

At the boundary between region $a)$ and $b)$ black identifies the
points where the two algorithms perform equally, the surrounding grayscale corresponds
the ratio of the speeds, becoming white when it is above $1.1$.

To understand region $c)$ recall that the push-sum algorithm will
always give 0 error when $p=0$, so it is reasonable to expect that
no high errors occur when $p$ is very small. There is also a
small area near $p=1,R=0$ which means it is very unlikely to get near the
true average when the transmission error probability is extremely
high.

It is important to add that on the area on the upper left
corner, push-sum is a preferable choice compared to ARGA in the
following way. When some
choice of parameters is unreachable for push-sum (e.g.,
$p=0.2, R=0.4$), we may use a feasible instance with stronger error
constraints ($p=0.2, R=0.25$ in this case) and it turns out that this
is still faster than ARGA for the original parameters. This
consistently holds for the whole part of region $c)$ corresponding to
low $p$ and high $R$, found to the left and above
region $b)$.

\begin{figure}[h]
  \centering
  \resizebox{0.6\textwidth}{!}{\input{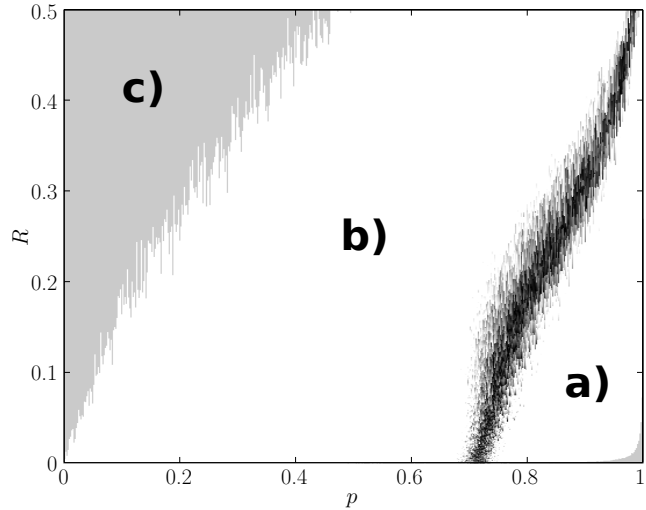}}
  \caption{Regions of the parameter space $(p,R)$ of transmission
    failure probability and target error tolerance where a) consensus
    is more efficient; b) push-sum is more efficient; c) push-sum
    does not reach this zone.}
  \label{fig:speedcomparison}
\end{figure}

We conclude that the push-sum algorithm is the better alternative for
a wide range of setups, but we should be careful when the
transmission failure probability $p$ is extremely high.

\section{The invariance relation}
\label{sec:invariance}

In this section we will prove an invariance relation that will be of
key importance for our analysis. Before that, we need to
establish fundamental properties of the push-sum algorithm with transmission
failures.

\subsection{Convergence}
\label{subsec:convergence}

First we prove Theorem~\ref{thm:convergence} showing that the
push-sum algorithm with transmission failures converges exponentially fast, meaning that the 
scaled values of different agents will be the same in the
limit. It is easy to verify that all values and weights go to zero for
$0<p_e<1$ because the same fraction of loss may happen independently at
every step. Nevertheless
their ratio converges to a meaningful value as shown below. At this point we only aim to 
confirm convergence, and do not search for the best rate possible.
\begin{proof}[Proof of Theorem~\ref{thm:convergence}]
  First we prove convergence for scalar $x_i(t)$. For higher dimensions, we can use this result coordinate-wise, then we will need to additionally verify that $x^*$ indeed lies in the convex hull of the initial values.

  The proof relies on analyzing the following event: a node $i$ holding the highest weight $w_i(t)$ at time $t$ successfully influences, maybe indirectly, all other nodes over a period of fixed length starting at $t$. We will show that (i) this event contracts the range of all ratios $x_j/w_j$ by a certain fixed proportion, and (ii) it almost surely occurs sufficiently often. 

To formalize this idea, let $i^*_t$ be (one of) the nodes with the highest weight at time $t$. Since the communication graph is strongly connected, it is therefore possible to select a directed tree starting from $i^*_t$ and reaching every other node. Let $e^t_1,e^t_2,\ldots, e^t_{n-1}$ be the edges of this tree, ordered starting from node $i_t^*$ and moving away (e.g.,
following a depth-first search). The broadcast event $B(t)$ consists of the edges $e^t_1,e^t_2,\ldots, e^t_{n-1}$ being selected at respective times $t,t+1,\ldots,t+n-2$ (and thus no other edge being selected during that period) and all transmissions being successful. The core of the proof relies on the following two claims:

{\bf Claim 1.} Let \begin{align*}
    M(t) &:= \max_k \frac{x_k(t)}{w_k(t)}, & m(t) &:= \min_k \frac{x_k(t)}{w_k(t)}.
  \end{align*}
  be the maximal and minimal ratios at time $t$. If the event $B(t)$ occurs, then 
    $$M(t+n-1)-m(t+n-1) \le q' (M(t)-m(t)),$$
    for the constant $q'=1-\frac{1}{2^n}$.
\begin{proof}
Observe the algorithm is invariant under the addition of $Kw_i$ to
each $x_i$ for a common $K$, meaning that it corresponds to a translation of all
ratios $\frac{x_i}{w_i}$ by $K$, and this translation is
preserved by the algorithm. Hence we can assume without the loss of generality that $m(t) = 0 = \min_i x_i(t)$. Let $w^*=w_{i_t^*}(t) = \max_i w_i(t)$, $x^* = x_{i_t^*}(t)$, and $j_s$ be the end node of $e^t_s$. We now show 
that after node $j_s$ is reached at step $s$, for all $u \geq s \geq t$, there holds
\begin{align}
x_{j_s}(u) &\geq \frac{1}{2^{u-t}}x^*,   \label{eq:ineq_x}\\
w_{j_s}(u) &\leq 2w^*.   \label{eq:ineq_w}
\end{align}
We show this by induction, first on $u$ keeping $s$ fixed, then increasing $s$.
To start, for $s=t$ these inequalities hold by definition for $u=t$. Afterwards, if they hold for a certain $u, s$, they further hold for the same $s$ and all $u' \geq t$ because at every time step, $x_{j_s}$ and $w_{j_s}$ either are divided by 2 or remain unchanged.

Given $s>t$, let us now assume the inequalities hold for all $s>s'\geq
t$ and all $u\geq s'$, and prove they then hold for $s$ and $u=s$. Node $j_s$ has not emitted or received any message since time $t$ prior to $s$, hence $x_{j_s}(s-1) = x_{j_s}(t) \geq 0$ and $w_{j_s}(s-1) \leq w_{j_s}(t) \leq w^*$. At time $s$, it receives a message from a node $j_{s'}$ from some $s'<s$. Using \eqref{eq:ineq_w}, there holds 
$$
w_{j_s}(s) = w_{j_s}(s-1) +\frac{1}{2} w_{{j_{s'}}}(s-1) \leq w^* + \frac{1}{2}2w^*= 2w^*.
$$ 
Similarly, using \eqref{eq:ineq_x}, there holds
$$
x_{j_s}(s) = x_{j_s}(s-1) + \frac{1}{2} x_{{j_{s'}}}(s-1)\geq 0 + \frac{1}{2}\cdot \frac{1}{2^{s-1-t}} x^* = \frac{1}{2^{s-t}} x^*.
$$
Inequalities \eqref{eq:ineq_x} and \eqref{eq:ineq_w} are now proved for the increased $s$ and $u = s$, then $u$ can be incremented as before.
Now relying \eqref{eq:ineq_x} and \eqref{eq:ineq_w} for any $j$ we have
$$
\frac{x_j(t+n-1)}{w_j(t+n-1)} \geq \frac{\frac{1}{2^{n-1}x^*}}{2w^*}= \frac{1}{2^n}\frac{x_{i^*}(t)}{w_{i^*}(t)}.
$$

Let us re-introduce the possibility of a $m(t)\neq 0$, retranslating all the ratios $\frac{x_j}{w_j}$ simply yields
$$
\frac{x_j(t+n-1)}{w_j(t+n-1)}-m(t) \geq  \frac{1}{2^n}\left( \frac{x_{i^*}(t)}{w_{i^*}(t)}-m(t)  \right),
$$
for all $j$. Minimizing over $j$ gives
$$
m(t+n-1)-m(t) \geq  \frac{1}{2^n}\left( \frac{x_{i^*}(t)}{w_{i^*}(t)}-m(t)  \right).
$$
A similar argument shows
$$
M(t)-M(t+n-1) \geq  \frac{1}{2^n}\left(M(t)- \frac{x_{i^*}(t)}{w_{i^*}(t)}\right).
$$
Adding up the two, after cancellations we arrive at
$$
M(t+n-1) - m(t+n-1)\leq \left(1-\frac{1}{2^n}\right)(M(t)-m(t)).  
$$
\end{proof}
 
{\bf Claim 2.}
  \label{cl:N_growth}
  Counting the occurrences of $B(t)$, let $N_k = \big|\{1\le l \le k ~:~ B(l(n-1))~ \text{occurs} \}\big|$. Then $N_k$ grows asymptotically at a linear rate, namely
  $$\liminf_k \frac{N_k}{k} \ge \alpha,\qquad a.s.,$$
  for some $\alpha>0$.

\begin{proof}
Note that the events $B(l(n-1))$ are not independent, as the occurrence of
one will possibly change which node $i^*$ will have the highest weight
at a further time and hence the probability of the occurrence of $B(t)$ at that time.
Nevertheless, their probability conditional to past history can be uniformly bounded. 
Let $\underline \pi= \min \{\pi_{e}>0\}$ be the probability of selection of the least probable edge, and 
$\overline p= \max \{p_{e}<1\}$ 
the maximal transmission failure probability. The event $B(t)$
consists of a certain $n-1$ long sequence of edges to be selected in a
given order starting at time $t$, and of all the corresponding
transmission to be successful. Its probability conditional to past
history is thus at least $\alpha = (\underline \pi (1-\overline p))^{n-1}$.

We need one more step to avoid dependency issues. By time $t$ the probability of $B(t)$ is known (as $i^*(t)$ is identified). Let us introduce an independent confirmation event $B_c(t)$ of probability $\alpha/P(B(t))$ and define $\tilde B(t)$ as the event that both $B(t)$ and $B_c(t)$ occur. Clearly $P(\tilde B(t)) = \alpha$ and $\tilde B(l(n-1))$ is an i.i.d.\ time series.
Using $\tilde N_k = \big|\{1\le l \le k ~:~ \tilde B(l(n-1))~ \text{occurs} \}\big|$, we have $N_k \geq \tilde N_k$, and standard concentration results show  $\liminf_k  \frac{ N_k}{k}\geq \liminf_k  \frac{\tilde N_k}{k} = \alpha$, a.s.
\end{proof}

Let us turn back to proving Theorem \ref{thm:convergence} for scalar $x_i(t)$.
  Previously, in Claim 1 we have shown the contracting effect of $B(t)$ on the
  range of the ratios, by a factor $q'<1$ in $n-1$ steps. In Claim 2 we have just shown that this happens
  often enough, approximately $\alpha k$ times out of $k$ possibilities. Combining the two claims we get
  \begin{equation}
    \label{eq:Mm_decrease}
    \limsup_k \Big(\big(M(k(n-1)) - m(k(n-1))\big)
    (q'^{-\alpha}-\varepsilon')^k \Big)\le 1, \qquad a.s.
  \end{equation}
  for any small $\varepsilon'>0$.
  The statement of the theorem follows by setting
  $$x^* = \bigcap_t[m(t),M(t)], \qquad q = q'^{-\frac{\alpha}{(n-1)}}-\varepsilon,$$
  for any small $\varepsilon>0$.

  Finally, for multidimensional $x_i(t)$, \eqref{eq:Mm_decrease} holds for each coordinate and this implies exponential convergence in norm with rate $q$.
  We have to ensure that the limit point $x^*$ lies in the convex hull of the initial values. Observe that the convex hull of $\{\frac{x_1(t)}{w_1(t)},\frac{x_2(t)}{w_2(t)},\dots, \frac{x_n(t)}{w_n(t)}\}$ can only shrink: the only situation when a ratio changes is when some node $j$ successfully receives a message from some other node $i$. In that case the new ratio is
  $$\frac{x_j(t+1)}{w_j(t+1)} = \frac{x_j(t)+x_i(t)/2}{w_j(t)+w_i(t)/2} = \frac{w_j(t)}{w_j(t)+w_i(t)/2} \cdot\frac{x_j(t)}{w_j(t)} + \frac{w_i(t)/2}{w_j(t)+w_i(t)/2} \cdot\frac{x_i(t)}{w_i(t)}.$$
  This new ratio is a convex combination of the previous ones, so the convex hull at $t+1$ has to be inside the convex hull at $t$. The intersection of this descending sequence can only be $\{x^*\}$ and the initial body at $t=0$ is the convex hull of $\{\frac{x_1(0)}{w_1(0)},\frac{x_2(0)}{w_2(0)},\dots, \frac{x_n(0)}{w_n(0)}\}$. Note that all $w_i(0)$ are initialized as 1, so we can omit dividing by them to complete the proof.
\end{proof}

\subsection{Final values}
\label{sec:finalvalues}

In the previous subsection we have seen that the push-sum algorithm
always reaches consensus. However, due to the random transmissions and
failures this final value is a random variable. We set up the definitions to study the behavior of the final consensus value acquired from the push-sum algorithm with
transmission failures. 
We introduce two equivalent descriptions of the algorithm to formulate this,
first we define a random function
$T$ as the \emph{push-sum function} which results in the final consensus value of a push-sum algorithm
given the initial measurements of the agents. We will then define the
\emph{push-sum coefficient} which captures how the push-sum algorithm
mixes the initial measurements.

For the remaining of this subsection, we fix a directed graph $G$ on the $n$ nodes together with a distribution $\pi$ on the edges, and a transmission failure probability $p_e$ for every edge. 

\begin{definition}
\label{def:PSoperatore}
For any real vector space $V$ we define the \emph{push-sum step} to
be a random operator $S:V^n\times\mR^n\rightarrow
V^n\times\mR^n$, with $(v',w')= S(v,w)$ being the result of the
selection of an edge $e=(i,j)$ according to the distribution $\pi$,
then for every $k\neq i,j$ set $v_k'=v_k,w_k'=w_k$ 
and also
\begin{align*}
  v_i' &= v_i/2, &  v_j' &= v_j+\chi_e v_i/2,\\
  w_i' &= w_i/2, &  w_j' &= w_j+\chi_e w_i/2.
\end{align*}
where $\chi_e$ is a random variable taking a value $1$ with probability $1-p_e$ (successful transmission) and $0$ else (failed transmission).
\end{definition}

\begin{definition}
\label{def:PSfunc}
For any real vector space $V$ we define the \emph{push-sum function}
as the random function $T:V^n \rightarrow V$ whose value $T(\cV)$ for
some $\cV =(v_1,v_2,\ldots, v_n)$ is obtained in the following way: 

We initialize
$\cV(0)=\cV,~w(0)=\vec{1}\in\mR^n$, and at every time step $t$ we apply an independent push-sum step 
$$
(\cV(t+1),w(t+1)) = S(\cV(t),w(t)).
$$
We compute then the value of the push-sum function as 
\begin{equation}
  \label{eq:PSfunc_final}
  T(\cV) = \lim_{t\rightarrow\infty}\frac{v_i(t)}{w_i(t)}.
\end{equation}
Observe that this is well defined almost surely as the limit is the same for all
indices $i$. This follows from the convergence of the push-sum
algorithm (Theorem~\ref{thm:convergence}). Still, the function value is random, depending on the
choice of indices and $\chi_e$ at each time step.

Defining the operator $N:V^n\times\mR^n\rightarrow V$ by $N(\cV,w) =
\frac{v_1(t)}{w_1(t)}$ (where the choice of the index 1 here is arbitrary), we can rewrite the value taken by push-sum function $T$ as
\begin{equation}
  \label{eq:PSfunc_compact}
  T(\cV) = \lim_{t\rightarrow\infty}N(S^t(\cV,\vec{1})),
\end{equation}
where with a slight abuse of notation $S^t$ means the composition of
$t$ independent copies of $S$.
\end{definition}

Observe that the push-sum function is linear in its input. We therefore decompose it as a random linear operator applied to the
input vectors. For this purpose, we introduce the push-sum coefficients.

\begin{definition}
\label{def:PScoeff}
Let $\cE = (e_1,e_2,\ldots,e_n)$
 with $e_i = (0,0,\ldots,1,\ldots,0)^\top$ having a
  single 1 at coordinate $i$. 
The \emph{push-sum coefficient} $\tau\in\mR_+^n$ is the random variable  $\tau = T(\cE)$.

We will also use the random variables $\cZ(1),w(1)$ corresponding
respectively to the values and weights after one single step of the
push-sum algorithm starting from $\cE$ and unit weights. Formally:
\begin{equation}\label{eq:defcZ}
(\cZ(1),w(1)) = S(\cE,\vec{1}); 
\end{equation}
\end{definition}

The following Proposition confirms the interpretation of the push-sum coefficients. It directly implies Theorem \ref{thm:finallinear}.

\begin{proposition}
\label{prp:PSfn_coeff}
Let  $\cV=(v_1,v_2,\ldots,v_n)\in V^n$ for a vector space $V$. There holds  
$$T(\cV) = \langle \tau,\cV\rangle ,$$
where $\tau$ is a push-sum coefficient of Definition \ref{def:PScoeff}.
\end{proposition}

  \begin{proof}
    We run two push-sum algorithms computing $T(\cV)$ and $T(\cE)$ in
    parallel, coupling the choices of edges and transmission failures
    of the two. We denote by $v_i(t)$ the vectors for
    the process computing $T(\cV)$ and we denote by $z_i(t)$ the vectors
    used for $T(\cE)$. The weights $w_i(t)$ will be the same for the
    two processes.

    The variables are initialized as $v_i(0) = v_i$ and $z_i(0) =
    e_i$. At start we clearly have $v_i(0) = \langle z_i(0), \cV \rangle$.
    It is easy to see that the system of equation $v_i(t) = \langle
    z_i(t), \cV \rangle$ is preserved at
    each step. Adding the fact that the weight variables are the same
    for the two processes we get
    $$
    \frac{v_1(t)}{w_1(t)} = \left\langle \frac{z_1(t)}{w_1(t)}, \cV \right\rangle.
    $$
    When taking the limit in $t$ the definitions of $T(\cV)$ and
    $T(\cE)=\tau$ appear which completes the proof.
  \end{proof}

\subsection{Measures and evolution of distributions}
\label{subsec:measures}

We plan to work with the distribution of the final value of the
push-sum algorithm, for which we introduce the following measures and transformation.

\begin{definition}
  Let $\mu$ be the distribution of a push-sum coefficient $\tau = T(\cE)$. This
  measure has a support contained in the unit simplex $\{x\in\mR_+^n~|~\|x\|_1=1\}$.
\end{definition}

\begin{definition}
We define the induced measure transformation $\langle \sigma,\cV\rangle $ for a measure $\sigma$ on
$\mR^n$ and $\cV\in V^n$. This is a measure on $V$, formally
$$\langle \sigma,\cV\rangle (A) = \sigma\big(\{x\in\mR^n~|~\langle x,\cV\rangle \in A\}\big).$$
In other words, for any set $A\subset V$ we identify and measure those
$x$ for which $\langle x,\cV\rangle$ falls in $A$.
\end{definition}

\begin{definition}
We define the rescaling operator $L$ on $\mR_+^n\setminus\{0\}$ as 
$$L(x) = \frac{x}{\|x\|_1}.$$
\end{definition}
This operator scales its argument to have unit $1$-norm. The operator
$L$ naturally induces the measure transformation $L^*$,
$$L^*(\sigma)~(A) = \sigma\big(\{x\in\mR_+^n\setminus\{0\}~|~L(x) \in A\}\big).$$

Now we develop an invariance relation describing
$\mu$ using the time-homogeneity of the push-sum algorithm. Broadly
speaking, we want to exploit that a push-sum algorithm is equivalent to a push-sum step followed by a push-sum algorithm.

The main result of this subsection is the following Theorem,
providing a formal description of
\eqref{eq:muinvariance} stated in Section~\ref{sec:results}.

\begin{theorem}
  \label{thm:invariance}
  After the first step of a push-sum coefficient algorithm we get the random
  $n$-tuple of vectors $\cZ(1)=(z_1(1),z_2(1),\ldots,z_n(1))$ defined in \eqref{eq:defcZ}. Then the
  following invariance equation holds for the push-sum coefficient distribution $\mu$:
  \begin{equation}
    \label{eq:PSinvariance}
    \mu = \mE L^*\left(\langle \mu,\cZ(1)\rangle\right).    
  \end{equation}
\end{theorem}
 To see the right hand side of \eqref{eq:PSinvariance} in more detail, observe that we first take the inner product of $\mu$
  with the (random) collection of vectors $\cZ(1)$. We map back the
  resulting measure to the simplex of unit 1-norm non-negative
  vectors. Finally we take expectation w.r.t.\ $\cZ(1)$.

\begin{proof}
  The random variable $\tau=T(\cE)$ follows the distribution $\mu$.
  We launch the
  push-sum algorithm as specified in
  Definition~\ref{def:PSfunc}. We may take $t+1$ push-sum steps by first taking one, then
  applying all the remaining steps to the result. Formally,
  \begin{equation}\label{eq:exprTcZSZ1}T(\cE) = \lim_{t\ra\infty} N(S^{t+1}(\cE,\vec{1})) = \lim_{t\ra\infty}
  N(S^t(\cZ(1),w(1))).\end{equation}
  We define a very similar random variable for comparison. We perform
  a single step of the push-sum algorithm, which gives us $\cZ(1)$. We
  now treat this as the input for a new push-sum function. Formally,
  we are looking at
\begin{equation*}
T(\cZ(1)) = \lim_{t\ra\infty} N(S^t(\cZ(1),\vec{1})).
\end{equation*}
  Let us compare the two expressions above. We are going to use a
  coupling argument. Coupling two random processes $X, Y$ consists of
  defining a new random process $(X',Y')$ where the first part
  $X'$ is distributed as $X$ and the second
  part $Y'$ is distributed as $Y$, but in addition to that the joint distribution is constructed
  to serve some specific purpose.
  For example, this tool is often used to compare the
  distribution of $X$ and $Y$, see \cite{levin:2009markov}, Chapter 5.

  {\bf Claim 1.} \emph{The two algorithms
    $S^t(\cZ(1),w(1))$ and $S^t(\cZ(1),\vec{1})$ can be coupled so
    that the vector parts of the two are the same.}
  
  We see two push-sum
  algorithms, both initialized with vectors $\cZ(1)$ but with different
  weights. Let us couple them by always choosing the same edge for
  both and also taking the same transmission success decision for the
  two. Notice that during the push-sum steps $S$ the weights $w_i(t)$ have no
  effect on the evolution of the vectors $v_i(t)$. Initially the
  vectors were identical and the same transformations were applied at
  all steps, so they must remain equal for the two algorithms at all times.

  {\bf Claim 2.} \emph{The push sum coefficients $\tau$ and $T(\cZ(1))$ are parallel: More precisely, there
  exists a scalar $\rho>0$ which can be random and dependent from our other
  variables such that
  \begin{equation}
    \label{eq:claim2}
\tau =     T(\cE) \,{\buildrel d \over =}\, T(\cZ(1))\rho,
\end{equation}
where ${\buildrel d \over =}$ means equal in distribution.}

By definition $T(\cZ(1)) = \lim_{t\ra\infty}
  N(S^t(\cZ(1),\vec{1}))$ and we have $T(\cE) = \lim_{t\ra\infty}
  N(S^t(\cZ(1),w(1)))$
  as seen in \eqref{eq:exprTcZSZ1}.
  We can couple $\cZ(1)$ to be the same for the two expressions.
  Then we can use the coupling of Claim 1 from the first step of
  $T(\cZ(1))$ and from the second step of $T(\cE)$.

  We get that the vectors of the realizations of $S^t(\cZ(1),\vec{1})$ and
  $S^t(\cZ(1),w(1))$ are the same for all $t$. Remember now that the operator $N$
  divides the first of the vector by a scalar. Hence the realizations
  of $N(S^t(\cZ(1),\vec{1}))$ and $N(S^t(\cZ(1),w(1)))$ are scalar
  multiple of each other, i.e. they are parallel. Since this is true for
  all $t$, it remains true for their limiting values when $t\to
  \infty$ (note that vectors are bounded away from 0). This implies the
  existence of the $\rho$ mentioned in the claim. It is positive
  because $T(\cE)$ and $T(\cZ(1))$ are in the positive orthant.

  {\bf Claim 3.} \emph{There holds
  \begin{equation}
    \label{eq:invproof1}
    T(\cE) \,{\buildrel d \over =}\, L(T(\cZ(1))).
  \end{equation}}

It follows from Theorem \ref{thm:convergence} that every realization of $\tau = T(\cE)$ is a convex combination of the initial values $\cE = (e_1,e_2,\dots, e_n)$, and has thus a unit 1-norm.
  From \eqref{eq:claim2} we have
  $$\rho \|T(\cZ(1))\| = \|T(\cZ(1))\rho\|   \,{\buildrel d \over =}\, \|T(\cE)\| = 1.$$
  This means that $\rho  \,{\buildrel d \over =}\, \|T(\cZ(1))\|^{-1}$, and thus using the definition of the operator $L$
  $$T(\cE) \,{\buildrel d \over =}\, T(\cZ(1))\rho 
\,{\buildrel d \over =}\, \frac{T(\cZ(1))}{\|T(\cZ(1))\|}
  = L(T(\cZ(1))).$$
  
  {\bf Claim 4.} \emph{The invariance equation (\ref{eq:PSinvariance})
  holds.}

  The distribution of the left hand side expression of
  (\ref{eq:invproof1}) is $\mu$. We
  complete the proof by showing that the right hand side of (\ref{eq:invproof1}) follows the
  distribution indicated on the right hand side of
  (\ref{eq:PSinvariance}). 
  According to Proposition~\ref{prp:PSfn_coeff} the right hand side of
  (\ref{eq:invproof1}) can be expressed using a push sum-coefficient,
  $$L(T(\cZ(1)))= L\langle \tau,\cZ(1)\rangle.$$
  Let us express the distribution of this random variable for any
  realization of $\cZ(1)$. The distribution of $\tau$ is $\mu$, and
  the operators acting on the random variables translate to the
  corresponding measure transformations. We thus arrive at
  $$L^*\langle \mu,\cZ(1)\rangle.$$
  In order to get the overall distribution we have to integrate over
  possible realizations of 
  $\cZ(1)$ which results in
  $$\mE L^*\langle \mu,\cZ(1)\rangle.$$
  This is exactly the expression presented on the right hand side of (\ref{eq:PSinvariance}).

\end{proof}

\section{Proof of error bounds for two agents}
\label{sec:boundproof}

\subsection{Invariance relation for to two agents}

We now focus on the case of 2 agents when both edges are chosen with
probability $1/2$ and the transmission failure probabilities are
equal, simply denoted by $p$.
We assume scalar inputs and we are
interested how far the final consensus value is from the real average.
Our error measure $R$ defined in \eqref{eq:Rdef} has in this case a particularly simple interpretation. One can indeed verify that 
$$R = \mE T\big( (-1,1) \big)^2,$$
that is, it corresponds to the expected square error (and result) when
launching the push-sum with initial values $\cV=(-1,1)$, for which the
average is 0.
We develop lower
and upper bounds for this error. In order to do this, we first investigate
the push-sum coefficient for 2 agents.

Let us interpret previous results from Section~\ref{subsec:measures} for this particular case. 
Observe that the expected value in the invariance relation $\mu =
\mE L^*\left(\langle \mu,\cZ(1)\rangle\right)$ Theorem
\ref{thm:invariance} is taken with respect to $\cZ(1)$, defined as
the vector part or the result of one push-sum step on $\cE$. In the
two-node cases, $\cZ(1)$ can take only four values, corresponding to
successful or failed transmissions from 1 to 2 or from 2 to 1. We
will thus only consider 4 transformations $ L^*\left(\langle
\mu,\cZ(1)\rangle\right)$. Besides, the measure $\mu$ is now
supported on the segment $(1,0)^\top\!-(0,1)^\top\!$ and is thus
inherently a measure on a one dimensional set. 

Hence it will be convenient to interpret it this way: We parametrize the diagonal segment
$(1,0)^\top\!-(0,1)^\top\!$ by the first coordinate and we define $\nu$ the
distribution of this first coordinate. This $\nu$ is then supported on $[0,1]$.
Working with $\nu$ does preserve the symmetry of the problem.
The next Proposition particularizes Theorem \ref{thm:invariance} to
the two-node system in terms of $\nu$:

\begin{proposition}
  \label{prp:invar2nu}
  For the 2-nodes system with equal probability of transmission let
  $\nu$ be the distribution of the first coordinate of the push-sum coefficient.
  There holds 
\begin{equation}
    \label{eq:invar2nu}
    \nu = \frac{1-p}{2}d_1^*(\nu) + \frac{1-p}{2}d_2^*(\nu) + \frac{p}{2}f_1^*(\nu) + \frac{p}{2}f_2^*(\nu),
  \end{equation}
  where $d_1^*, d_2^*, f_1^*, f_2^*$ are the induced measure
  transformations of the functions defined by 
\begin{equation}
  \label{eq:def-d1d2f1f2}
  \begin{aligned}
    d_1(x) &= \frac{1}{3-2x}, & f_1(x) &= \frac{x}{2-x},\\
    d_2(x) &= \frac{2x}{1+2x}, & f_2(x) &= \frac{2x}{1+x}.
  \end{aligned}
\end{equation}
\end{proposition}
Note that the functions $d_1$, $d_2$, $f_1$ and $f_2$ correspond
respectively to successful transmissions initiated by node 1 or 2 and to failed transmissions initiated by 1 or 2.

We show the above Proposition in two steps. First, we express
Theorem \ref{thm:invariance} in the original form, in terms of $\mu$, for the current
case of 2 nodes. Second, we translate the equation for $\mu$ to a
simplified form for $\nu$. 

Consider for instance the case of a successful transmission by node 1
to 2. The initial vectors were $\cE = \big((1,0)^\top\!,
(0,1)^\top\!\big)$ which become $\cZ(1) = \big((1/2,0)^\top\!,
(1/2,1)^\top\!\big)$. For this case, in the spirit of Theorem
\ref{thm:invariance} we use the affine transformation which brings the
segment $(1,0)^\top\! - (0,1)^\top\!$ to $(1/2,0)^\top\! -
(1/2,1)^\top\!$ and compose it with the
central projection mapping it back to $(1,0)^\top\! -
(0,1)^\top\!$ and denote the resulting transformation by
$D_1$. Similarly, $D_2$ is the transformation corresponding to a
delivered message from node 2 to 1.
$F_1,F_2$ are the transformations for failed transmissions by node 1
and 2, respectively. All these are illustrated in Figure~\ref{fig:alltransform}.

The invariance equation of Theorem~\ref{thm:invariance}
becomes
\begin{corollary}
\label{cor:invar2}
For 2 agents the following equation holds for the push-sum coefficient
distribution $\mu$:
\begin{equation}
  \label{eq:invar2}
  \mu = \frac{1-p}{2}D_1^*\mu + \frac{1-p}{2}D_2^*\mu + \frac{p}{2}F_1^*\mu + \frac{p}{2}F_2^*\mu,
\end{equation}
where $D_1^*, D_2^*, F_1^*, F_2^*$ are the induced measure
transformations of the functions $D_1, D_2, F_1, F_2$.
\end{corollary}
\begin{proof}
In the two agent setting there are four cases for the first step
depending on which node is transmitting and whether it is successful or
not. Let us expand the formulation of Theorem~\ref{thm:invariance}.
\begin{align*}
\mu = \mE L^*\left(\langle \mu,\cZ(1)\rangle\right) &= P(1\ra 2\text{
  success})L^*\Big(\big\langle \mu, \mE(\cZ(1)~|~1\ra 2\text{ success})\big\rangle\Big)\\
&+ P(2\ra 1\text{
  success})L^*\Big(\big\langle \mu, \mE(\cZ(1)~|~2\ra 1\text{ success})\big\rangle\Big)\\
&+ P(1\ra 2\text{
  failure})L^*\Big(\big\langle \mu, \mE(\cZ(1)~|~1\ra 2\text{ failure})\big\rangle\Big)\\
&+ P(2\ra 1\text{
  failure})L^*\Big(\big\langle \mu, \mE(\cZ(1)~|~2\ra 1\text{ failure})\big\rangle\Big).
\end{align*}
For the first term, the probability of occurring is
$\frac{1-p}{2}$. Conditioning on the event that the first step was a
$1\ra 2$ successful transmission, we get $\cZ(1) = \big((1/2,0)^\top\!,
(1/2,1)^\top\!\big)$. The measure $\mu$ is supported on the segment
$(1,0)^\top\!-(0,1)^\top\!$, so it is easy to check that the inner product $\big\langle \mu,
\mE(\cZ(1)~|~1\ra 2\text{ success})\big\rangle$ will be connecting $(1/2,0)^\top\!-
(1/2,1)^\top\!$. As the inner product is linear, this measure will be
the same as $\mu$ moved by a similarity transformation to span the
segment between the two new endpoints. The $L^*$ transformation maps back the measure to
$(1,0)^\top\!-(0,1)^\top\!$. 

The other cases are interpreted in a similar way, only the conditional
values of $\cZ(1)$ change. For $2\ra 1$ successful transmission it
is $\big((1,1/2)^\top\!-(0,1/2)^\top\!\big)$, for $1\ra 2$ failed
transmission it is $\big((1/2,0)^\top\!-(0,1)^\top\!\big)$, for $2\ra 1$ failed
transmission it is $\big((1,0)^\top\!-(0,1/2)^\top\!\big)$. 
The series of transformations is illustrated in Figure~\ref{fig:alltransform}.

\begin{figure}[h]
  \centering
  \includegraphics[width=0.8\textwidth]{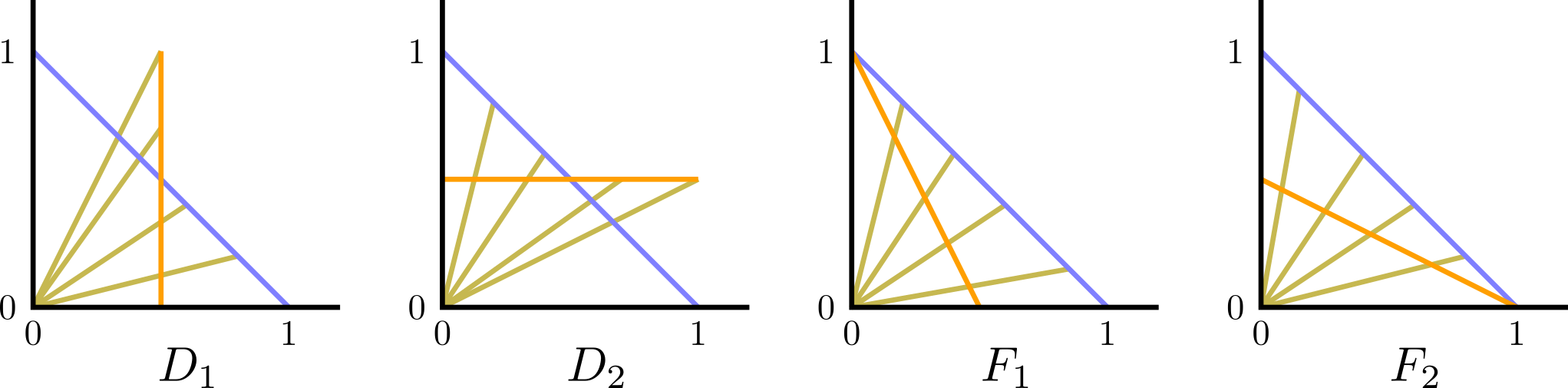}
  \caption{The four transformations of Corollary~\ref{cor:invar2},
    corresponding to possible successful ($D_1,D_2$) and failed ($F_1,F_2$) message transmissions for
    2 nodes. Points on the blue segment
are moved to the orange segment using an affine transformation, then mapped along the yellow lines
back to the blue line using a central projection from the origin.}
  \label{fig:alltransform}
\end{figure}

\end{proof}

It remains to reformulate the equation~\eqref{eq:invar2} in terms of $\nu$.

\begin{proof}[Proof of Proposition~\ref{prp:invar2nu}]
We need to reformulate $D_i,F_i$ in terms of $\nu$, which we
demonstrate for $D_1$. The point $x\in [0,1]$ corresponds to
$(x,1-x)^\top\!$ on the diagonal line supporting $\mu$. For the inner
product we get
$$\left\langle (x,1-x), \big((1/2,0)^\top\!,
(1/2,1)^\top\!\big)\right\rangle = (1/2,1-x)^\top\!.$$
Then using $L$ this is renormalized to have unit 1-norm:
$$\frac{1}{1/2+1-x}(1/2,1-x)^\top\! = \left(\frac{1}{3-2x},\frac{2-2x}{3-2x}\right)^\top\!.$$
Finally, to switch back to the domain of $[0,1]$ we have to take the first
coordinate of the result. In a similar way, we get the four functions
corresponding to the preceding four measure transformations:
\begin{equation}
  \label{eq:def-d1d2f1f2}
  \begin{aligned}
    d_1(x) &= \frac{1}{3-2x}, & f_1(x) &= \frac{x}{2-x},\\
    d_2(x) &= \frac{2x}{1+2x}, & f_2(x) &= \frac{2x}{1+x}.
  \end{aligned}
\end{equation}
Substituting these functions in~\eqref{eq:invar2} while replacing
$\mu$ with $\nu$ we arrive at the statement of the
proposition.
\end{proof}

\subsection{Lower bound}

In this subsection we derive the bounds of Theorem~\ref{thm:lowerbound}.
The general plan is to develop simple necessary conditions on $\nu$
based on (\ref{eq:invar2nu}). We then get a lower bound by optimizing
the error on the broader class of measures only constrained by the
necessary condition obtained.

The first step of simplification is dropping terms from the right hand
side of (\ref{eq:invar2nu}) and using the non-negativity of the terms:
\begin{equation}
  \label{eq:nu_ineq}
  \nu \ge  \frac{p}{2}f_1^*(\nu) + \frac{p}{2}f_2^*(\nu).  
\end{equation}

Let us introduce a useful subdivision of $[0,1]$:
For any $k\in\mN$ let 
\begin{align*}
  h_{-k} &= \frac{1}{2^k+1}, & h_k &= 1- \frac{1}{2^k+1}.
\end{align*}
  
\begin{lemma}
  \label{lm:agrid}
  For any $k\in\mZ$ we have
  \begin{align*}
    f_1(h_k) &= h_{k-1},\\
    f_2(h_k) &= h_{k+1}.
  \end{align*}
\end{lemma}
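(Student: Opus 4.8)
The plan is to collapse the two-part definition of the grid into a single closed form and then verify both identities by direct substitution. The key observation is that, although $a_k$ is defined separately for non-negative and negative indices, both pieces are captured by the single expression $a_k = \frac{2^k}{2^k+1}$, valid for every $k\in\mZ$. Indeed, for $k\ge 0$ one has $1-\frac{1}{2^k+1}=\frac{2^k}{2^k+1}$, and writing a negative index as $-k$ with $k>0$ and multiplying numerator and denominator by $2^{-k}$ gives $\frac{1}{2^k+1}=\frac{2^{-k}}{2^{-k}+1}$; the two conventions agree at $k=0$, where both yield $a_0=\frac{1}{2}$. Establishing this uniform formula first is what lets us treat all integers $k$ at once and avoid a tedious case analysis at the sign boundary.

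With the closed form in hand, the two claims reduce to routine algebra. For $f_1$ I would compute
\[
f_1(a_k) = \frac{a_k}{2 - a_k} = \frac{2^k/(2^k+1)}{2 - 2^k/(2^k+1)} = \frac{2^k}{2^k + 2} = \frac{2^{k-1}}{2^{k-1}+1} = a_{k-1},
\]
where the crucial step is clearing the inner denominator so that the outer denominator becomes $2(2^k+1)-2^k = 2^k+2 = 2(2^{k-1}+1)$, after which a factor of $2$ cancels. Symmetrically, for $f_2$,
\[
f_2(a_k) = \frac{2 a_k}{1 + a_k} = \frac{2^{k+1}/(2^k+1)}{(2^{k+1}+1)/(2^k+1)} = \frac{2^{k+1}}{2^{k+1}+1} = a_{k+1},
\]
using $1+a_k=\frac{2^{k+1}+1}{2^k+1}$.

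There is no genuine obstacle here: once the unified formula is in place the verification is mechanical, and the only conceptual move is the unification itself (together with the consistency check at the boundary index $k=0$). The index shifts $k\mapsto k-1$ and $k\mapsto k+1$ never leave the indexing scheme, which is automatic since the closed form is defined on all of $\mZ$. Conceptually, the lemma simply records that $f_1$ and $f_2$ act on the grid $\{a_k\}$ as the shifts by $-1$ and $+1$ respectively, which is precisely the structural fact that makes this subdivision the natural one for iterating the inequality~\eqref{eq:nu_ineq}.
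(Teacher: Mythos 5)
Your proof is correct and takes essentially the same approach as the paper, which simply asserts that the identities "can be directly checked using the formulas defining $f_i, a_k$." Your unified closed form $a_k = \frac{2^k}{2^k+1}$ for all $k \in \mathbb{Z}$ is a clean way to organize that direct check, merging the paper's two-case definition into a single verification.
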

\begin{proof}
  This can be directly checked using the formulas defining $f_i,h_k$.
\end{proof}

We disregard the rather complicated fine structure of $\nu$ for a
moment and only explore how it acts on the intervals defined by
$h_i$. For this purpose let us define
\begin{equation}
\label{eq:si_ti_def}
\begin{aligned}
s_i &= \nu\big( (h_{-i-1},h_{-i}) \big) = \nu\big( (h_i,h_{i+1})
\big),\\
t_i &= \nu\big( \{h_{-i}\} \big) = \nu\big(\{h_i\}\big).  
\end{aligned}
\end{equation}
Here we use the straightforward symmetry of $\nu$ and the strict monotonicity of $f_i$. Combining
(\ref{eq:nu_ineq}) with Lemma~\ref{lm:agrid} we get the following inequalities.
\begin{lemma}
  \label{lm:s_ineq}
  For $s_i,t_i$ defined in \eqref{eq:si_ti_def} the following inequalities hold:
  \begin{alignat*}{2}
    s_0 &\ge \frac{p}{2}s_0 + \frac{p}{2}s_1, &\qquad &\\
    s_i &\ge \frac{p}{2}s_{i-1} + \frac{p}{2}s_{i+1}, & i &\ge 1,\\
    t_i &\ge \frac{p}{2}t_{i-1} + \frac{p}{2}t_{i+1}, & i &\ge 0.
  \end{alignat*}
\end{lemma}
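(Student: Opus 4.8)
The plan is to read the measure inequality (\ref{eq:nu_ineq}) as a pointwise statement about sets. Since $f_1^*(\nu)$ and $f_2^*(\nu)$ are pushforward measures, for every Borel set $A\subseteq[0,1]$ the inequality (\ref{eq:nu_ineq}) reads
$$\nu(A) \ge \frac{p}{2}\nu\big(f_1^{-1}(A)\big) + \frac{p}{2}\nu\big(f_2^{-1}(A)\big).$$
I would then apply this with $A$ taken to be the open intervals $(a_i,a_{i+1})$ and the singletons $\{a_i\}$, and compute the relevant preimages.

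The first ingredient is the computation of these preimages. Both $f_1$ and $f_2$ are strictly increasing on $[0,1]$, hence injective and interval-preserving, so their inverse images of intervals and points are again intervals and points. Combined with Lemma \ref{lm:agrid}, which gives $f_1(a_k)=a_{k-1}$ and $f_2(a_k)=a_{k+1}$, this yields $f_1^{-1}\big((a_i,a_{i+1})\big)=(a_{i+1},a_{i+2})$ and $f_2^{-1}\big((a_i,a_{i+1})\big)=(a_{i-1},a_i)$, and for the singletons $f_1^{-1}(\{a_i\})=\{a_{i+1}\}$ and $f_2^{-1}(\{a_i\})=\{a_{i-1}\}$.

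With the preimages in hand the substitution is immediate for the generic indices. For $i\ge 1$, applying the inequality to $A=(a_i,a_{i+1})$ and recalling $s_i=\nu\big((a_i,a_{i+1})\big)$ gives directly $s_i\ge \frac{p}{2}s_{i+1}+\frac{p}{2}s_{i-1}$; the identical computation applied to $A=\{a_i\}$ gives the $t_i$ inequality for $i\ge 1$.

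The step that requires care — and which I expect to be the only real subtlety — is the boundary case $i=0$, where the preimages reach into the negative-index region and produce terms that are not literally among the $s_j$ or $t_j$. Applying the inequality to $(a_0,a_1)$ produces the term $\nu\big(f_2^{-1}((a_0,a_1))\big)=\nu\big((a_{-1},a_0)\big)$. Here I would invoke the symmetry of $\nu$ under the reflection $x\mapsto 1-x$, which sends $a_k\mapsto a_{-k}$ since $1-a_k=\frac{1}{2^k+1}=a_{-k}$; this symmetry is precisely what makes the identifications $s_i=\nu\big((a_{-i-1},a_{-i})\big)=\nu\big((a_i,a_{i+1})\big)$ and $t_i=\nu(\{a_{-i}\})=\nu(\{a_i\})$ well-defined, and it gives $\nu\big((a_{-1},a_0)\big)=s_0$. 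Substituting yields $s_0\ge\frac{p}{2}s_1+\frac{p}{2}s_0$. The analogous argument on $\{a_0\}$ uses $\nu(\{a_{-1}\})=t_1$ to give $t_0\ge\frac{p}{2}t_1+\frac{p}{2}t_1$, which agrees with the stated inequality under the symmetric convention $t_{-1}=t_1$.
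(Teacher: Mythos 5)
Your proposal is correct and follows exactly the route the paper intends: the paper presents Lemma \ref{lm:s_ineq} as an immediate consequence of evaluating the measure inequality \eqref{eq:nu_ineq} on the intervals $(a_i,a_{i+1})$ and atoms $\{a_i\}$, computing preimages via Lemma \ref{lm:agrid} together with the strict monotonicity of $f_1,f_2$, and handling the index-$0$ cases through the symmetry of $\nu$ under $x\mapsto 1-x$ already invoked when defining $s_i$ and $t_i$. Your write-up simply makes these steps (including the convention $t_{-1}=t_1$) explicit, and all the preimage computations check out.
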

We derive an even simpler condition on $s_i$ and $t_i$.
  \begin{lemma}
    \label{lm:s_gammarate}
    Let $\phi=\frac{1-\sqrt{1-p^2}}{p}$. For any $i\ge 0$ we have
    \begin{align*}
      \phi s_{i} &\le s_{i+1},\\
      \phi t_{i} &\le t_{i+1}.
    \end{align*}
  \end{lemma}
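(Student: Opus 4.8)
The plan is to read the inequalities of Lemma~\ref{lm:s_ineq} as second-order linear difference inequalities and to exploit that $\gamma$ is precisely the smaller characteristic root of the underlying recursion. Rewriting the middle inequality as $s_{i+1}\le \frac{2}{p}s_i-s_{i-1}$ for $i\ge 1$, the relevant characteristic polynomial is $\lambda^2-\frac{2}{p}\lambda+1$, whose roots are $\gamma=\frac{1-\sqrt{1-p^2}}{p}$ and $1/\gamma=\frac{1+\sqrt{1-p^2}}{p}$. For $0<p<1$ one checks $0<\gamma<1<1/\gamma$, and since the product of the roots is $1$ we get the key identity $\frac{2}{p}=\gamma+\frac{1}{\gamma}$, which I will use throughout.

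First I would introduce the ``defect'' $u_i=s_{i+1}-\gamma s_i$ and show it obeys a simple recursive bound. Substituting $\frac{2}{p}=\gamma+\frac{1}{\gamma}$ into $s_{i+1}\le\frac{2}{p}s_i-s_{i-1}$ and subtracting $\gamma s_i$ from both sides gives, after grouping, $u_i\le\frac{1}{\gamma}(s_i-\gamma s_{i-1})=\frac{1}{\gamma}u_{i-1}$ for every $i\ge1$. Iterating this forward yields $u_j\le (1/\gamma)^{\,j-i_0}u_{i_0}$ for all $j\ge i_0$, the direction of the inequality being preserved at each step because $1/\gamma>0$.

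The crux is to rule out a negative defect, using nonnegativity of $\nu$ together with the finiteness of its total mass. Suppose for contradiction that $u_{i_0}<0$ for some $i_0\ge0$. Since $1/\gamma>1$, the bound $u_j\le (1/\gamma)^{\,j-i_0}u_{i_0}$ forces $u_j\to-\infty$. On the other hand, $u_j=s_{j+1}-\gamma s_j\ge-\gamma s_j$, and the numbers $s_j=\nu\big((a_j,a_{j+1})\big)$ are the masses of pairwise disjoint subintervals of $[0,1]$, so $\sum_j s_j\le 1$; in particular $\{s_j\}$ is bounded and $u_j$ is bounded below, a contradiction. Hence $u_i\ge0$ for all $i\ge0$, i.e.\ $\gamma s_i\le s_{i+1}$. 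The argument for $t_i$ is identical: the third inequality of Lemma~\ref{lm:s_ineq} gives $t_{i+1}\le\frac{2}{p}t_i-t_{i-1}$, the $t_j$ are again summable, and the same contradiction yields $\gamma t_i\le t_{i+1}$.

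The subtle point, and what I expect to be the heart of the matter, is that the local difference inequalities alone are too weak: manipulating them only produces upper bounds on growth, never the desired lower bound $s_{i+1}\ge\gamma s_i$. The essential extra ingredient is the global integrability of $\nu$, namely that the $s_i$ and $t_i$ are summable; this is exactly what excludes the unstable $1/\gamma$-mode and pins the decay rate at $\gamma$ from below. I therefore anticipate that the care needed in correctly invoking this boundedness, rather than any computation, is the only real obstacle.
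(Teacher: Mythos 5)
Your proof is correct, and while it rests on the same core insight as the paper's --- a contradiction argument exploiting that $\gamma$ and $1/\gamma$ are the two roots of $\lambda^2-\frac{2}{p}\lambda+1$, with the unstable root $1/\gamma>1$ producing a blow-up incompatible with $\nu$ being a probability measure --- the execution is genuinely different and leaner. The paper assumes $\gamma s_i>s_{i+1}$, introduces an auxiliary sequence $\tilde s_j$ solving the recursion $\tilde s_j=\frac{2}{p}\tilde s_{j-1}-\tilde s_{j-2}$ with equality, proves $s_j\le\tilde s_j$ by a sign-tracking induction on $d_j=s_j-\tilde s_j$, then solves the recursion explicitly as $\alpha\gamma^{j-i}+\tilde\alpha\tilde\gamma^{j-i}$, argues $\tilde\alpha<0$, and concludes $s_j\to-\infty$, contradicting $s_j\ge 0$. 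You instead factor the second-order difference inequality through the root $\gamma$: the defect $u_i=s_{i+1}-\gamma s_i$ obeys the first-order inequality $u_i\le\frac{1}{\gamma}u_{i-1}$, so a single negative defect decays to $-\infty$ geometrically, contradicting $u_j\ge-\gamma s_j\ge-\gamma$. This buys two simplifications: chaining your first-order inequality is automatic because the coefficient $1/\gamma$ is positive (whereas the paper's second-order inequality, with its negative coefficient on $d_{j-2}$, cannot be iterated blindly --- that is precisely what its comparison induction is for), and you never need to solve for, or sign, the mode coefficients. The only extra ingredient you need is an upper bound on the $s_j$, where the paper needs only nonnegativity; that bound is free since each $s_j$ is the mass of a subset of a probability space. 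On that note, your closing claim that summability of the $s_j$ is the essential global ingredient slightly overstates what you actually use: the pointwise bound $0\le s_j\le 1$ suffices for your contradiction, so no appeal to disjointness or to $\sum_j s_j\le 1$ is really needed.
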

  \begin{proof}
    We show the inequality for $s_i$, the proof is the same for $t_i$.
    Assume the claim does not hold for a given $i$, meaning $\phi s_i
    > s_{i+1}$. For this proof $i$ is now fixed and $j$ is used as a
    running index.
    We define the auxiliary series $\tilde{s}_{i},\tilde{s}_{i+1},\ldots$ as follows:
    $$\tilde{s}_{i} = s_{i},\qquad \tilde{s}_{i+1} = s_{i+1},$$
    \begin{equation}
      \label{eq:stilde_rec}
      \tilde{s}_j = \frac{2}{p} \tilde{s}_{j-1} - \tilde{s}_{j-2},\qquad j\ge i+2.      
    \end{equation}
    We will show that $\tilde{s}_j\ge s_j$ and $\tilde{s}_j
    \ra -\infty$. This would imply $s_j \ra -\infty$ which is
    impossible for entries of a probability distribution.

    First, to compare the two series, let us define
    $\delta_j = s_j-\tilde{s}_j.$
    Using Lemma~\ref{lm:s_ineq} and the definition of $\tilde{s}_j$ the recursion scheme for this series becomes
    $$\delta_i = 0,\qquad \delta_{i+1} = 0,$$
    $$\delta_j \le \frac{2}{p} \delta_{j-1} - \delta_{j-2},\qquad j\ge i+2.$$
    It is easy to check that $\delta_j\le \delta_{j-1} \le 0$ implies
    $\delta_{j+1}\le \delta_{j} \le 0$. We immediately get $\delta_j\le 0$ in general
    and
    \begin{equation}
      \label{eq:s_stilde}
     \tilde{s}_j \ge  s_j ,\qquad j\ge i.  
    \end{equation}

    Now let us compute the series $(\tilde{s}_j)$. It is defined by a
    second order recursion, so we get the two
    canonical solutions by solving
    $$y^2 = \frac{2}{p} y - 1.$$
    The two roots are
    $$y_1 = \frac{1-\sqrt{1-p^2}}{p} = \phi, \qquad y_2 =
    \frac{1+\sqrt{1-p^2}}{p} = \tilde{\phi}.$$
    These are different except the pathological case of $p=1$.
    Any solution of the recursion (\ref{eq:stilde_rec}) must be a
    mixture of $\phi^j$ and $\tilde{\phi}^j$.
    Then any element of the sequence can be expressed as
    \begin{equation}
      \label{eq:stilde_explicit}
      \tilde{s}_{j} = \lambda\phi^{j-i} + \tilde{\lambda}\tilde{\phi}^{j-i},
    \end{equation}
    for a proper choice of $\lambda, \tilde{\lambda}$. Therefore we need
    to find $\lambda,\tilde{\lambda}$ such that
    \begin{equation}
      \label{eq:alphaequations}
      \begin{aligned}
        \tilde{s}_i &= \lambda + \tilde{\lambda}\\
        \tilde{s}_{i+1} &= \lambda\phi + \tilde{\lambda}\tilde{\phi},
      \end{aligned}
    \end{equation}
    By the condition $\phi s_i > s_{i+1}$ we get
    $\tilde{\lambda}<0$ while solving \eqref{eq:alphaequations}. Adding the fact that $\phi < 1 <
    \tilde{\phi}$, this implies $\tilde{s}_j\rightarrow
    -\infty$.
    
    To sum up, if we had $\phi s_i > s_{i+1}$, then we would get $\tilde{s}_j\rightarrow
    -\infty$, and by \eqref{eq:s_stilde} also $s_j\rightarrow
    -\infty$, which is impossible for probabilities.
  \end{proof}

  To give some intuition, $\phi \approx p/2$ for $p$ near 0, but reaches 1 as
    $p$ increases to 1.
Based on the properties provided by Lemma~\ref{lm:s_gammarate} we are ready to prove Theorem~\ref{thm:lowerbound} claiming \eqref{eq:lowerbound}.

\begin{proof}[Proof of Theorem~\ref{thm:lowerbound}]
  Let us first separate the measure $\nu$ as follows:
  \begin{align*}
    {\nu}_h &= \nu\big|_{\{h_i : i\in\mZ\}},\\
    {\nu}_c &= \nu - \nu_h.
  \end{align*}
  In other words, ${\nu}_h$ corresponds to the $t_i$ terms while
  ${\nu}_c$ corresponds to the $s_i$ terms. For the overall weights we define $M_h =
  \nu_h([0,1])$ and $M_c = \nu_c([0,1])$. Concerning the errors, we
  introduce the following notation for the error corresponding to a
  single point:
  $$r(x) = (1-2x)^2.$$
  We define $R_h,R_c$ for the overall errors of $\nu_h,\nu_c$. As
  we created a separation of the original measure $\nu$ we
  immediately have
  \begin{align*}
    1 &= M_h + M_c,\\
    R &= R_h + R_c.
  \end{align*}

  Let us investigate $\nu_c$. Observe that the error of
  $\nu\big|_{(h_i,h_{i+1})}$ is larger than the error at $h_i$ with
  weight $s_i$ ($i\ge 0$). The symmetrical arguments hold for
  $i<0$. We get
  $$R_c\ge 2\sum_{i=0}^\infty s_i r({h_i}).$$
  We now check how low can the right hand side be while keeping
  the overall weight constant, that is, $2\sum s_i = M_c = \nu_c([0,1])$.

  Let us define the series $(s^*_i)$ such that $2\sum s^*_i =
  \nu_c([0,1])$ and $\phi s^*_i = s^*_{i+1}$ for
  all $i\ge 0$. This can be seen as the extremal series that still
  satisfies the claim of Lemma~\ref{lm:s_gammarate}. We will show that the error corresponding to this series
  is at most that of $(s_i)$. To this end, let us define $\delta_i =
  s_i-s^*_i$. Clearly $\sum \delta_i = 0$. We also have
  $$
  \delta_{i+1} = s_{i+1} - s^*_{i+1} \ge \phi s_i - \phi s^*_i =
  \phi \delta_i.
  $$
  From this we see that $\delta_i>0$ implies that all the later terms of
  $(\delta_i)$ are also positive. Therefore we see the following structure
  of the series $\delta_i$. It has to begin with some non-positive terms up to
  some index $I$ to ensure that the sum is 0. After that, all terms
  are positive.
  
  Furthermore, we see
  \begin{equation}
    \label{eq:di_comparison}
    \sum_{i=0}^I (-\delta_i)r({h_i}) \le \sum_{i=I+1}^\infty \delta_i r({h_i}).
  \end{equation}
  This holds because the sum of the positive weights $-\delta_i$ on the
  left hand side and the sum of $\delta_i$ on the right hand side are the same, but
  the coefficients $r({h_i})$ are larger on the right hand
  side.

  Now let us compare the error of the two series.
  $$\sum_{i=0}^\infty s_i r({h_i}) - \sum_{i=0}^\infty s^*_i r({h_i}) =
  \sum_{i=0}^\infty \delta_i r({h_i}) \ge 0.$$
  The first equality is simply the definition of $\delta_i$, the inequality
  follows from \eqref{eq:di_comparison}. As $(s_i^*)$ is well defined, it
  gives a lower bound on the quadratic error of $\nu_c$.
  \begin{equation*}
    R_c \ge \sum_{i=0}^\infty M_c(1-\phi) \phi^i r({h_i}).
  \end{equation*}

  We can treat $\nu_h$ the same way using $t_i$. Note that now there
  is only one central atom at $h_0$ while there were two intervals of
  interest around $h_0$ for $\nu_c$. Therefore the weights for the
  series $(t_i^*)$ are slightly different, and we get the lower bound
  on the error
  \begin{equation*}
    R_h \ge \frac{1-\phi}{1+\phi}r({h_0}) + \sum_{i=1}^\infty 2M_h\frac{1-\phi}{1+\phi} \phi^i r({h_i}).
  \end{equation*}
  Let us now add up the two lower bounds while using that the error at
  $h_0$ is 0.
  \begin{equation*}
    R \ge \left(M_c+\frac{2M_h}{1+\phi} \right) \sum_{i=1}^\infty (1-\phi)\phi^i
     r({h_i}).
  \end{equation*}
  Knowing that $M_c+M_h=1$ and $\phi\le 1$ this expression is
  minimal if $M_c=1,M_h=0$. With this setting, we arrive at a
  universal lower bound on $R$ as follows.
  \begin{align*}
    R &\ge (1-\phi) \sum_{i=1}^\infty \phi^i
    \left(1-2h_i\right)^2 = (1-\phi) \sum_{i=1}^\infty \phi^i
    \left(\frac{2^i-1}{2^i+1}\right)^2 = (1-\phi) \sum_{i=1}^\infty \phi^i
    \left(1-\frac{4\cdot 2^{i}}{(2^i+1)^2}\right)\\
    &= (1-\phi) \sum_{i=1}^\infty \phi^i
     -(1-\phi) \sum_{i=1}^\infty \phi^i
    \frac{4\cdot 2^{i}}{(2^i+1)^2}
    = \phi - 4(1-\phi) \sum_{i=1}^\infty
    \frac{(2\phi)^{i}}{(2^i+1)^2}.
  \end{align*}
  This is exactly the first lower bound we wanted to show. For the
  second, simpler claim we decrease the denominators $(2^i+1)^2$ to
  $2^{2i}$ in order to get a geometric series that is easy to sum.
  \begin{align*}
    R &\ge \phi - 4(1-\phi)\frac{2\phi}{9}-4(1-\phi) \sum_{i=2}^\infty
    \frac{(2\phi)^{i}}{(2^i+1)^2} \ge \phi - 4(1-\phi)\frac{2\phi}{9}-4(1-\phi) \sum_{i=2}^\infty
    \frac{(2\phi)^{i}}{2^{2i}}\\
    &=\phi - 4(1-\phi)\frac{2\phi}{9}
    -4(1-\phi)\frac{\phi^2}{4}\frac{1}{1-\frac{\phi}{2}}
    = \phi - \frac{8}{9}\phi(1-\phi)-\frac{2}{2-\phi}\phi^2(1-\phi).
  \end{align*}
This is the second lower bound we were aiming for. The asymptotic rate near $p=0$
follows easily for both lower bounds using $\phi =  p/2 + O(p^2)$ for small $p$.
\end{proof}

Having a look at Figure~\ref{fig:lowerupperbounds} we see that the lower
bounds we obtained qualitatively capture the real behavior of the
algorithm. We get an error linear in the failure rate for $p\approx 0$
and then we get an error approaching 1 as $p\approx 1$. Still, we did
some strong simplification steps so quantitatively we do experience a
gap between the simulated and the proven values.

\subsection{Upper bound}

In this subsection we present the proof of the upper bound of
Theorem~\ref{thm:generalupperbound}.
 
\begin{proof}[Proof of Theorem~\ref{thm:generalupperbound}]
Once again we base our studies on the invariance equation
(\ref{eq:invar2nu}). We express $\nu$ as a mixture of measures that are
supported strictly within $[0,1]$.

We give an intuition on the main ideas used in this proof. For a
measure $\pi$ with restricted support in $[x,y]$ for some
$0\le x<y \le 1$, we immediately have an upper bound on the error: we simply
find the point of the interval furthest from $1/2$, and obtain $\max((1-2x)^2,(1-2y)^2)$.

When we apply one of the transformations corresponding to a
transmission on $\pi$, we get a new interval containing the new
support. By studying the evolution of the interval we get an evolving
upper bound.

For a measure $\pi$ that can be expressed as the mixture of different
measures with restricted supports, again we get an upper bound on the
error by taking the weighted average of the error bounds for the
individual measures according to the idea above.

We will now convert these ideas to precise statements and apply them
for the push-sum coefficient measure $\nu$ to get an upper bound on
the error of the push-sum function.

For measures on $[0,1]$ we define the following intervals that will
serve as possible restrictions on the support.
\begin{equation}
  \label{eq:suppintervals}
  \begin{aligned}
    \alpha_i &= \left[ \frac{2^i}{2^{i+1}+1},
      \frac{2^i+1}{2^{i+1}+1}\right],\\
    \beta'_0 &= \left[\frac{1}{5}, \frac{2}{3}\right], \qquad \beta'_1 = \left[0, \frac{2}{3}\right],\\
    \beta''_0 &= \left[\frac{1}{3}, \frac{4}{5}\right], \qquad \beta''_1 = \left[\frac{1}{3}, 1\right],\\
    \gamma_0 &= \left[\frac{1}{5}, \frac{4}{5}\right], \qquad \gamma_1 = \left[0,1\right].
  \end{aligned}
\end{equation}

We construct a related abstract Markov chain. This chain has states
$A_0,A_1,\ldots$ and $B'_0,B'_1, B''_0,B''_1, C_0, C_1$. These will
correspond to possible different supports $\alpha_0,\alpha_1,\ldots$ and
$\beta'_0,\beta'_1, \beta''_0,\beta''_1, \gamma_0, \gamma_1$. From each state
there are four possible transitions with probabilities $(1-p)/2,
(1-p)/2, p/2, p/2$. The transitions are shown in Figure~\ref{fig:ps_trans} and \ref{fig:ps_loss}.
These will correspond to the
four measure transformations used in \eqref{eq:invar2nu}.
For example, whenever $supp(\pi) \subseteq \alpha_0$, we will have $supp(
d_1^*(\pi)) \subseteq \alpha_1$. And indeed, our Markov chain has a
transition from $A_0$ to $A_1$. Moreover, the transition probability
of the Markov chain matches the coefficient of the measure
transformation in \eqref{eq:invar2nu}.

This Markov chain is irreducible, aperiodic and
positive recurrent. Therefore the distribution will approach the unique stationary
distribution from any starting distribution.

For a measure $\sigma$ on $[0,1]$ we define the following operation to get
$\sigma^+$ in the spirit of \eqref{eq:invar2nu}.
\begin{equation}
  \label{eq:transform2nu}
    \sigma^+ = \frac{1-p}{2}d_1^*(\sigma) + \frac{1-p}{2}d_2^*(\sigma) + \frac{p}{2}f_1^*(\sigma) + \frac{p}{2}f_2^*(\sigma),
\end{equation}
where the transformations $d_i^*,f_i^*$ are the ones in \eqref{eq:invar2nu}.

We link the four possible transformations on some measure $\sigma$ with
the transitions of the Markov chain. We say that the state $s$ and the measure $\sigma$
are \emph{consistent} with each other whenever
    \begin{equation}
      \label{eq:statetosupport}
      \begin{aligned}
        &\text{if }  s = A_i \text{ then } supp(\sigma) \subset \alpha_i,\\
        &\text{if }  s = B'_i \text{ then } supp(\sigma) \subset \beta'_i,\\
        &\text{if }  s = B''_i \text{ then } supp(\sigma) \subset \beta''_i,\\ 
        &\text{if }  s = C_i \text{ then } supp(\sigma) \subset \gamma_i.
      \end{aligned}
    \end{equation}
\begin{lemma}
  \label{lm:MCmeasurelink1}
There is a pairing of the transitions of the Markov chain and
    the transformations of measures that correspond to each other:
  Take one of the states $s$ of the Markov chain and a
    measure $\sigma$ on $[0,1]$ that are consistent with each other.
    Let then
    $s^+$ be the next state of the Markov chain after a certain
    transition and $\sigma^+$ the measure resulting from the
    corresponding transformation. Then $s^+$ is consistent with
    $\sigma^+$.
\end{lemma}
\begin{proof}
  This is an elementary but tedious exercise, we only
    present a list of claims to confirm.

    For all pairs $\alpha_i,A_i$ (and similarly for the other states
    and intervals), one has to find the image of $\alpha_i$ under
    the four transformations in \eqref{eq:transform2nu}. Also, one has
    to gather the possible states following $A_i$ for the Markov
    chain. It has to be verified that the images of $\alpha_i$ fall
    within the intervals corresponding to the new state of the Markov
    chain.

    Even more, it should be confirmed that the weights match: the
    measure transformations with coefficients $(1-p)/2$ and $p/2$ in
    \eqref{eq:transform2nu} have to correspond to Markov chain transitions with
    transition probability $(1-p)/2$ and $p/2$.
\end{proof}

\begin{lemma}
  \label{lm:MCmeasurelink2}
  Let $U = u_{A_0},u_{A_1},\ldots$ be a probability
    distribution on the states of the Markov-chain. Let
    $\sigma_{A_0},\sigma_{A_1},\ldots$ be probability measures consistent
    with $A_0,A_1,\ldots$. Define
    $$\sigma = u_{A_0}\sigma_{A_0} + u_{A_1}\sigma_{A_1} + \ldots.$$
    
    Let $U^+ = u^+_{A_0},u^+_{A_1},\ldots$ be the probability distribution when taking one step of
    the Markov chain starting from $U$. Let $\sigma^+$ be the mixture of the four
    transformed version of $\sigma$ according to \eqref{eq:transform2nu}. Then
    there exists probability measures $\sigma^+_{A_0},\sigma^+_{A_1},\ldots$
    consistent with $A_0,A_1,\ldots$ such that
    $$\sigma^+ = u^+_{A_0}\sigma^+_{A_0} + u^+_{A_1}\sigma^+_{A_1} + \ldots.$$
\end{lemma}
\begin{proof}
  We need to apply Lemma~\ref{lm:MCmeasurelink1} multiple times. For example,
    assume for a moment that the Markov chain is at $B'_0$ and we are
    given a $\sigma_{B'_0}$ supported within $\beta'_0$. The Markov chain can
    step to $B'_0, A_0, C_0, B'_1$ with transition probabilities
    $(1-p)/2, (1-p)/2, p/2, p/2$. According to Lemma~\ref{lm:MCmeasurelink1} the four measure
    transformations corresponding to these four transitions will lead
    to four measures supported within $\beta'_0, \alpha_0, \gamma_0,
    \beta'_1$.

    When we relate the setting to the full Markov chain and $\sigma$, we
    observe that the probability of the Markov chain being in $B'_0$
    is $u_{B'_0}$ and so is the weight of $\sigma_{B'_0}$ in the
    expression of $\sigma$.

    After a step in the Markov chain, the probability distribution at
    any node, say $A_0$, is the aggregated probability of the incoming
    transitions, from $B'_0,B'_1,B''_0, B''_1$. In the same way
    $\sigma^+_{A_0}$ is the mixture of measures coming from the
    corresponding transformations of
    $\sigma_{B'_0},\sigma_{B'_1},\sigma_{B''_0}, \sigma_{B''_1}$. Observe that the
    initial probability distribution of the Markov chain and the
    weights in the expression of $\sigma$ match. Even more, the
    transition probabilities and the mixture weight are the
    same. Therefore the new probability distribution of the Markov
    chain an the mixture weights of the new measure must agree as well.
\end{proof}

\begin{figure}[ht]
\begin{center}
  \includegraphics[width=0.8\textwidth]{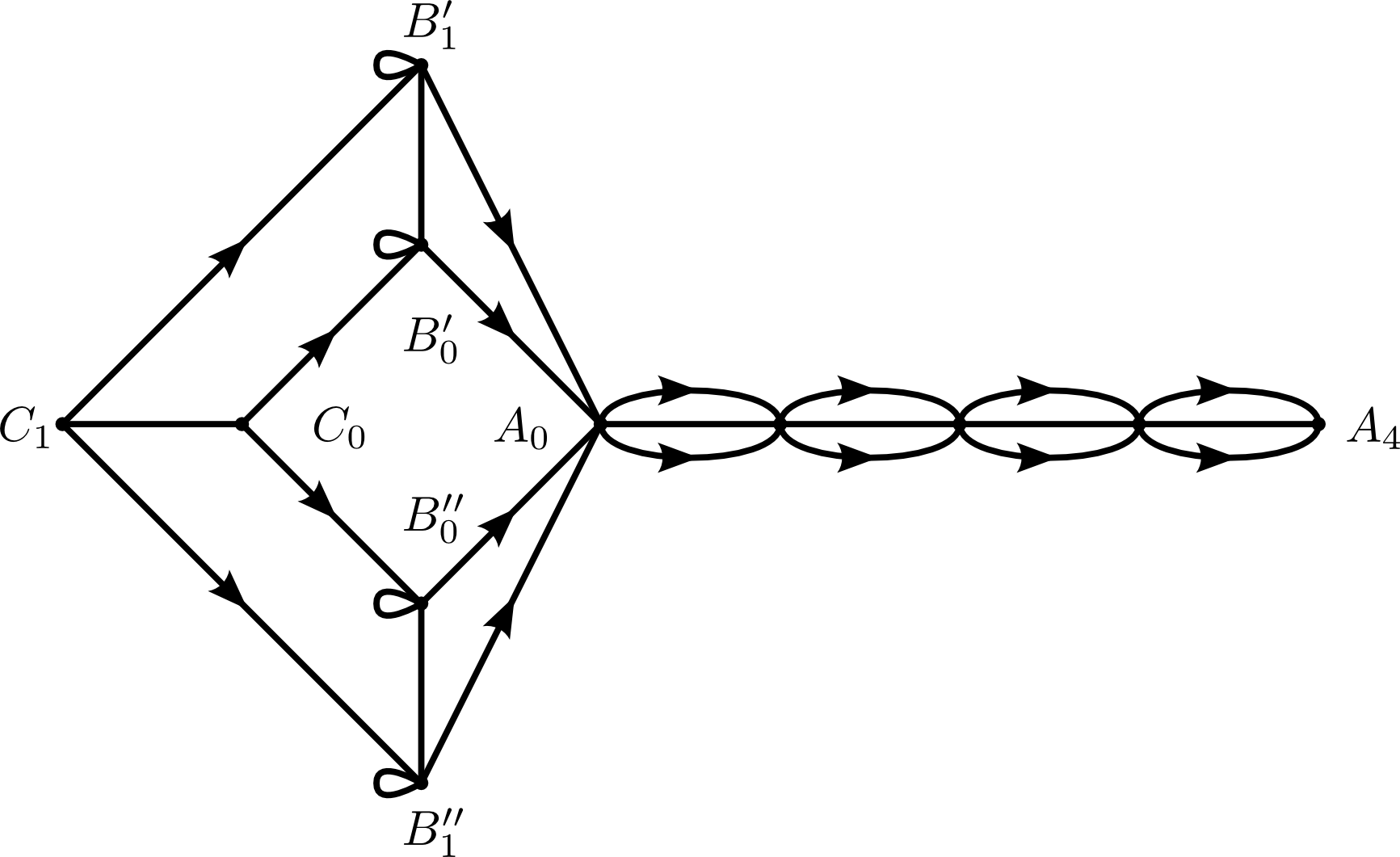}
\end{center}
\caption{Transitions
with probability $(1-p)/2$ for the Markov chain used in the proof of Theorem~\ref{thm:generalupperbound}.}
\label{fig:ps_trans}
\end{figure}

\begin{figure}[ht]
\begin{center}
  \includegraphics[width=0.8\textwidth]{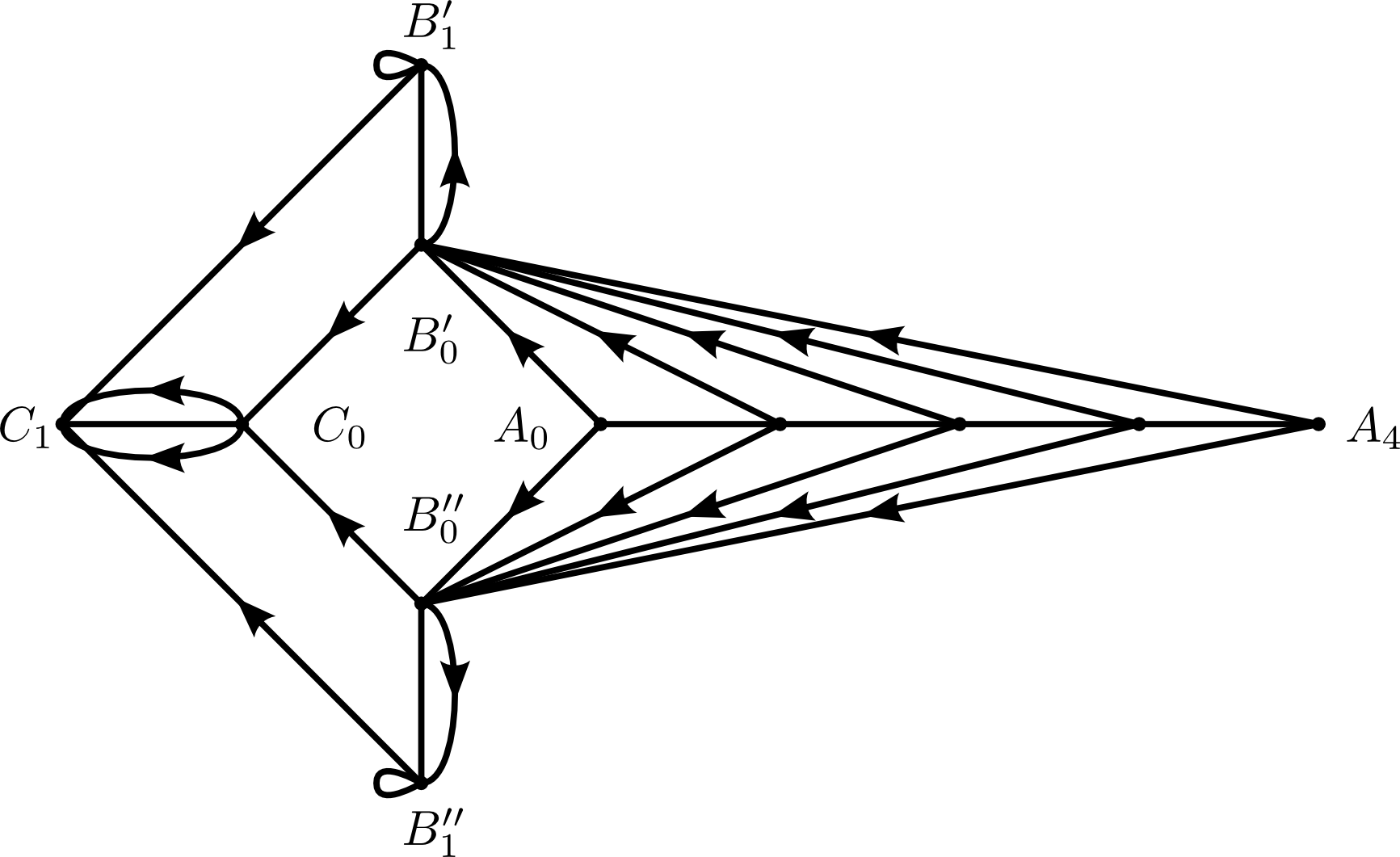}
\end{center}
\caption{Transitions with
  probability $p/2$ for the Markov chain used in the proof of Theorem~\ref{thm:generalupperbound}.}
\label{fig:ps_loss}
\end{figure}

Let us now initiate the process with $\sigma = \nu$, the push-sum coefficient measure,
and with the probability distribution $\mathbbm{1}_{C_1}$ for the Markov
chain putting all weight on $C_1$. This $\sigma$ is indeed supported within $[0,1]$, so this is a
consistent choice. Remixing $\sigma$ according to \eqref{eq:transform2nu} does
not change it as it is the solution of the invariance equation \eqref{eq:invar2nu}
mentioned before. On the other hand, the Markov chain will approach its
unique stationary distribution step by step. According to Lemma~\ref{lm:MCmeasurelink2} we get that $\nu$ can be expressed as the mixture
of measures supported within $\alpha_0,\alpha_1,\ldots$ with the weights being the values
of the stationary distribution of the Markov chain.

We now calculate this stationary distribution. Let $a_i, c_i$ be
the stationary probabilities of being in states $A_i, C_i$. Using the
symmetry of $B'_i$ and $B''_i$ let $b_i$ the stationary probabilities
of being in states $B'_i$ or $B''_i$.
Also,
let us define the total weights of the different
types of states as
\begin{align*}
  S_A &= \sum_{i=0}^\infty a_i,\\
  S_B &= b_0+b_1,\\
  S_C &= c_0+c_1.
\end{align*}
Looking at the scheme of possible transitions, we immediately find
that
\begin{align*}
  S_A &= (1-p)S_A + \frac{1-p}{2}S_B,\\
  S_B &= p S_A + \frac{1}{2}S_B + (1-p)S_C,\\
  S_C &= \frac{p}{2}S_B + p S_C.
\end{align*}
Knowing also that these three sum up to one, we get
$$S_A = (1-p)^2,\quad S_B=2p(1-p), \quad S_C = p^2.$$
For $a_i$ we see the simple recursion $a_i=(1-p)a_{i-1}$. Taking into
account their sum $S_A$ we get
$$a_i = p(1-p)^{i+2}.$$
For the other nodes, we have the equations.
\begin{align*}
  b_0 &= p S_A + \frac{1-p}{2} b_0 + (1-p) c_0,\\
  c_0 &= \frac{p}{2} b_0.
\end{align*}
These finally lead to
\begin{align*}
  b_0 &= \frac{2p(1-p)^2}{1+p^2},\\
  b_1 &= \frac{2p^2(1-p^2)}{1+p^2},\\
  c_0 &= \frac{p^2(1-p)^2}{1+p^2},\\
  c_1 &= \frac{2p^3}{1+p^2}.
\end{align*}

We get an upper bound on the error $R$ of the measure $\nu$ if we
combine upper bounds for the different intervals $\alpha_i,\beta'_i,\beta''_i,\gamma_i$ with the
weights of the stationary distribution of the Markov-chain.
Recall that we have an error bound for a certain interval if we find
the furthest point $y$ from $1/2$ and then evaluate $(1-2y)^2$ for the
quadratic
error. We will use $r$ to denote these
errors for the different intervals. By omitting the obvious calculations we get
\begin{align*}
  r({\alpha_i}) &= \frac{1}{(2^{i+1}+1)^2},\\
r({\beta'_0}) = r({\beta''_0}) &= \frac{9}{25},\\
r({\beta'_1}) = r({\beta''_1}) &= 1,\\
  r({\gamma_0}) &= \frac{9}{25},\\
  r({\gamma_1}) &= 1.
\end{align*}

Finally, let us combine all our estimates.
\begin{align*}
  R &\le \sum_{i=0}^\infty a_i r({A_i}) + b_0 r({B_0}) + b_1 r({B_1}) +
  c_0 r({C_0}) + c_1 r({C_1}) \\
  &= \sum_{i=0}^\infty \frac{p(1-p)^{i+2}}{(2^{i+1}+1)^2} +
  \frac{9}{25}\frac{2p(1-p)^2}{1+p^2} + 1\frac{2p^2(1-p^2)}{1+p^2} +
  \frac{9}{25}\frac{p^2(1-p)^2}{1+p^2} + 1\frac{2p^3}{1+p^2}.\\
  &\le \sum_{i=0}^\infty \frac{p(1-p)^{i+2}}{(2^{i+1})^2} +
  \frac{p}{25(1+p^2)}\left(18+23p+50p^2-41p^3\right)\\
  &= \frac{p(1-p)^2}{4}\frac{1}{1-(1-p)/4} +
  \frac{p}{25(1+p^2)}\left(18+23p+50p^2-41p^3\right)\\
  &= \frac{p(1-p)^2}{3+p} +
  \frac{p}{25(1+p^2)}\left(18+23p+50p^2-41p^3\right).
\end{align*}
This is the bound presented in the claim of the theorem.
\end{proof}

\section{Conclusions}
\label{sec:conclusions}

We have analyzed the push-sum algorithm in the presence of
transmission failures. This algorithm was originally designed to
perform perfect averaging on a network with directed
communication. When transmission failures are possible, the values of
the nodes of
the network still converge to a common value, but this might not be
the exact average of the initial measurements.

The final value is a random variable determined by the sequence of
communication steps and the sequence of transmission failures. We
develop new tools to better understand the resulting value,
and we form an equation that implicitly describes the distribution of this
random variable. Further investigation is performed for the simple case when there are
only two nodes, we develop lower and upper bounds on the expected
error.

There are very natural follow-up questions to consider for future
research. 

For the case of two nodes, the error bounds do still have a
gap between them, there is still room for improvement. One way of
  achieving this could be to consider a different relaxation of our invariance
  relation \eqref{eq:invar2nu}, a stronger version of \eqref{eq:nu_ineq} on
  which our bounds of Theorem
  \ref{thm:generalupperbound} are built. Inequality \eqref{eq:nu_ineq} does
  indeed not take into account the effect of successful
  transmissions as it only contains the operations
  $f_1^*,f_2^*$. Consequently it is also
  valid for other update rules in case of successful
  transmission, and so is Theorem \ref{thm:generalupperbound}.

  The other challenge would be to adapt our methodology to networks with multiple
nodes. We already have an insight on the distribution of the final
value, but it is not straightforward how this could be translated into
quantitative bounds.

\section*{Acknowledgments}
We would like to thank Asuman Ozdaglar for the inspiring discussion
that eventually lead to this research question
being investigated.

\bibliographystyle{ieeetr}
\bibliography{ringmixing,pushsum}

\end{document}